\renewcommand{\advantage}[2][n]{\textnormal{\textsf{Adv}}_{\mathrm{#1}}(#2)}
\renewcommand{\prob}[1]{\textnormal{\textsf{Pr}}[#1]}
\definecolor{myblue}{HTML}{0000CC}
\newtheorem{theorem}{\bf Theorem}
\newtheorem{myDef}{Definition}
\newcommand{\lowercasenote}[1]{\textnormal{\MakeLowercase{#1}}}
\definecolor{cpcolor}{RGB}{245, 245, 245} 
\definecolor{cspcolor}{RGB}{215, 215, 215} 
\definecolor{whilecolor}{RGB}{253, 253, 253} 
\def\BibTeX{{\rm B\kern-.05em{\sc i\kern-.025em b}\kern-.08em
    T\kern-.1667em\lower.7ex\hbox{E}\kern-.125emX}}
\begin{document}
\title{MORSE: An Efficient Homomorphic Secret Sharing Scheme Enabling Non-Linear Operation}
\author{Weiquan Deng, Bowen Zhao, Yang Xiao, Yantao Zhong, Qingqi Pei, Ximeng Liu
\thanks{Manuscript received x x, x; revised x x, x.}}


\maketitle

\begin{abstract}
Homomorphic secret sharing (HSS) enables two servers to locally perform functions on encrypted data directly and obtain the results in the form of shares.
A Paillier-based HSS solution seamlessly achieves multiplicative homomorphism and consumes less communication costs.
Unfortunately, existing Paillier-based HSS schemes suffer from a large private key size, potential calculation error, expensive computation and storage overhead, and only valid on linear operations (e.g., addition and multiplication).
To this end, inspired by the Paillier cryptosystem with fast encryption and decryption, we propose MORSE, an efficient homomorphic secret sharing scheme enabling non-linear operation, which enjoys a small key size, no calculation error and low overhead.
In terms of functions, MORSE supports addition, subtraction, multiplication, scalar-multiplication, and comparison.
Particularly, we carefully design two conversion protocols achieving the mutual conversion between one Paillier ciphertext and two secret shares, which allows MORSE to continuously perform the above operations.
Rigorous analyses demonstrate that MORSE securely outputs correct results.
Experimental results show that MORSE makes a runtime improvement of up to 9.3 times in terms of secure multiplication, and a communication costs reduction of up to $16.6\%$ in secure comparison, compared to the state-of-the-art.

\end{abstract}

\begin{IEEEkeywords}
Homomorphic secret sharing; homomorphic encryption; non-linear operation; secure computing; Paillier cryptosystem.
\end{IEEEkeywords}

\section{Introduction}
\IEEEPARstart{O}{utsourcing} data and computation to a cloud server is a flexible way to store data and perform operations on data.
Unfortunately, a cloud server being responsible for storage and computation on outsourced data is not always trusted, since it may intentionally or unintentionally leak data, leading to privacy issues. 
For example, a data breach at Twitter exposes the private information of over 5.4 million users \cite{Twitter_breaches}, and Pegasus Airline leaks 6.5 terabytes information due to a misconfiguration on AWS \cite{Pegasu_breaches}.
To avoid privacy leakage, users typically encrypt their data before outsourcing \cite{li2018privacy}.
Particularly, users also employ homomorphic encryption (HE) \cite{zhang2016review} including partially homomorphic encryption (PHE) and fully homomorphic encryption (FHE) to balance data usability and privacy.

HE enables operations over encrypted data (or say ciphertexts) directly, however, it suffers from several limitations. 
On the one hand, FHE suffers from expensive computation and storage overhead.
For instance, the operations of FHE on ciphertexts introduce noise, and when the noise exceeds a certain threshold, it results in decryption failure.
One powerful tool to mitigate the noise on ciphertexts is the bootstrapping operation introduced by Gentry \cite{gentry2009fully}.
Unfortunately, despite extensive researches, the bootstrapping operation still incurs high computation costs \cite{han2020better}.
Furthermore, CKKS \cite{cheon2017homomorphic}, a famous FHE implemented in Microsoft SEAL \cite{Microsoft_SEAL}, incurs 10 times to $10^4$ times storage overhead compared to plaintext \cite{akavia2023csher}. 
On the other hand, PHE support limited operations on ciphertexts and necessitates staggering communication costs to achieve more types of operation.
For example, the famous PHE, Paillier \cite{paillier1999public} algorithm, supports only homomorphic addition and lacks support for homomorphic multiplication.
To enable multiplicative homomorphism on Paillier ciphertexts and more types of operation, a twin-server architecture and the interactions between the two servers are widely adopted in the literature \cite{zhao2022soci, zhao2024soci+}.
Whereas, this method incurs substantial communication costs, resulting in performance degeneration in a bad network.

To mitigate the limitations of HE, HSS integrating PHE and secret sharing is introduced by scheme \cite{boyle2016breaking}. 
HSS works on the twin-server architecture, but requires no interaction between the two servers. 
Furthermore, HSS requires no Beaver multiplication triple \cite{beaver1997commodity} that is necessary for a secret sharing-based scheme to achieve multiplication. 
In terms of function, HSS supports continuous addition and multiplication, which removes the time-consuming bootstrapping operation required by FHE. 
Similar to secret sharing-based solutions, HSS stores data on two servers and allows the two servers to perform computation independently, while the computation result comes from the combination of that of the two servers.

\begin{table*}[!ht]
\centering
\label{Comparison of Different Paillier-Based HSS Scheme}
\caption{Functional comparison between existing schemes and MORSE}
\begin{threeparttable}
\begin{tabular}{@{}cccccccc@{}}
\toprule
Scheme 
& \begin{tabular}[c]{@{}c@{}}Small key size\end{tabular} 
& \begin{tabular}[c]{@{}c@{}}Negligible computation error \\ or no computation error  \end{tabular} 
& \begin{tabular}[c]{@{}c@{}}Low storage \\ overhead\end{tabular} 
& \begin{tabular}[c]{@{}c@{}}Efficient\end{tabular} 
& \begin{tabular}[c]{@{}c@{}}Non-linear operation\end{tabular} 
& \begin{tabular}[c]{@{}c@{}}Homomorphic \\ addition\end{tabular}
& \begin{tabular}[c]{@{}c@{}}Homomorphic \\ multiplication\end{tabular}
\\ 
\midrule
FGJS17 \cite{fazio2017homomorphic}        
& \XSolidBrush  & \XSolidBrush  & \XSolidBrush  & \XSolidBrush  & \XSolidBrush  & \Checkmark  & \Checkmark\\
OSY21 \cite{orlandi2021rise}       
& \XSolidBrush  & \Checkmark    & \XSolidBrush  & \XSolidBrush  & \XSolidBrush  & \Checkmark  & \Checkmark\\
RS21 \cite{roy2021large}       
& \XSolidBrush  & \Checkmark    & \XSolidBrush  & \XSolidBrush  & \XSolidBrush  & \Checkmark  & \Checkmark\\
MORSE   
& \Checkmark    & \Checkmark    & \Checkmark    & \Checkmark    & \Checkmark    & \Checkmark  & \Checkmark\\ 
\bottomrule
\end{tabular}
\end{threeparttable}

\end{table*}

Although HSS overcomes the limitations of HE, it faces with new challenges. 
Firstly, the solutions \cite{fazio2017homomorphic}, \cite{orlandi2021rise} and \cite{roy2021large} suffer from a large private key size, imposing a significant burden on the data owner (or client).
Secondly, solution \cite{fazio2017homomorphic} incurs potential calculation error. 
HSS involves a procedure solving discrete logarithm by the two servers, which is called distributed discrete logarithm (DDLog), share conversion, or distance function.
In particular, solution \cite{fazio2017homomorphic} features a non-negligible computation error in this procedure.
Although the error probability can be chosen as a small parameter, it consumes more computation costs.
Thirdly, solutions \cite{fazio2017homomorphic}, \cite{orlandi2021rise} and \cite{roy2021large} incur expensive computation and storage overhead. 
In computation overhead, solutions \cite{fazio2017homomorphic}, \cite{orlandi2021rise} and \cite{roy2021large} necessitate a huge amount of modular exponentiation operations.
In storage overhead, solutions \cite{fazio2017homomorphic,orlandi2021rise} store multiple ciphertexts involving the segments of a private key to achieve continuous computations, and the ciphertext space of solution 
\cite{roy2021large} is much larger than that of standard Paillier.
Lastly, existing solutions are only valid on linear operations (e.g., addition, and multiplication).
In particular, solutions \cite{fazio2017homomorphic}, \cite{orlandi2021rise} and \cite{roy2021large} lack support for non-linear operation (e.g., comparison). 

To tackle the aforementioned challenges, in this work, 
we propose MORSE\footnote{MORSE: an efficient ho\underline{MO}morphic sec\underline{R}et \underline{S}haring scheme enabling non-lin\underline{E}ar operation}, an efficient homomorphic secret sharing enabling non-linear operation.
Specifically, to improve computation efficiency and reduce storage overhead, we construct MORSE based on a variant of Paillier with fast encryption and decryption \cite{ma2021optimized}, which enjoys a small private key size.
Additionally, we adopt a perfectly correct DDLog \cite{orlandi2021rise} to avoid potential calculation error.
Particularly, MORSE supports computation over encrypted integers, which benefits from the feature of the Paillier cryptosystem \cite{ma2021optimized} and our careful design. 
Besides, a comparison protocol is proposed to make MORSE enable non-linear operation.
In short, the contributions of this work are three-fold.
\begin{itemize}
    \item \textbf{A novel homomorphic secret sharing}.
    Based on a variant of Paillier with fast encryption and decryption \cite{ma2021optimized}, we propose MORSE, an efficient homomorphic secret sharing designed to remedy the limitations of existing Paillier-based HSS schemes.
    MORSE features a small key size and effective operations on integer, and is capable of efficiently performing linear and non-linear operations. 
    \item \textbf{Continuous  operations support}.
    To achieve continuous operations on encrypted data, we carefully design a protocol for converting secret shares into Paillier ciphertexts (named \texttt{S2C}) and a protocol for converting Paillier ciphertexts into secret shares (named \texttt{C2S}).
    \item 
    \textbf{Linear and non-linear operations on integer}.
    MORSE enables not only linear operations (e.g., multiplication) but also non-linear operations (e.g., comparison). 
    Furthermore, MORSE supports secure operations over integer rather than natural number.  
    Particularly, MORSE outperforms the state-of-the-art in terms of runtime and communication costs.
\end{itemize}

The rest of this work is organized as follows.
Section \ref{Section_2} reviews the related work. 
Section \ref{Section_3} outlines the preliminaries necessary for constructing MORSE.
Section \ref{Section_4} describes the system model and the threat model.
In Section \ref{Section_5}, we detail the design of MORSE.
The correctness and security analyses are presented in Section \ref{Section_6}, while experimental evaluations are conducted in Section \ref{Section_7}.
Finaly, we conclude this work in Section \ref{Section_8}.

\section{Related Work}\label{Section_2}
Extensive researches focus on enabling Paillier \cite{paillier1999public} algorithm to perform secure multiplication on ciphertexts.
To facilitate multiplication on Paillier \cite{paillier1999public} ciphertexts, Elmehdwi \textit{et al.} \cite{elmehdwi2014secure} employed the twin-server architecture, capitalizing on the interactions between the two servers to achieve secure multiplication. 
Utilizing random numbers to mask the real values of private data during the interactions, the privacy of data can be guaranteed. 
However, the work \cite{elmehdwi2014secure} granted a server possession of a private key, potentially creating a single point of security failure.
To prevent the single point of security failure, the schemes \cite{liu2016privacy} and \cite{liu2016efficient} partitioned the private key of Paillier algorithm into distinct segments. 
Specifically, the work \cite{liu2016privacy} divided the private key into two parts, held by two non-colluding servers who together possess the capability to decrypt a ciphertext in a collaborative manner.
The work \cite{liu2016efficient} split the private key into multiple keys held by multiple servers, who can collaboratively decrypt a ciphertext.
Following the twin-server architecture of the work \cite{liu2016privacy}, Zhao \textit{et al.} \cite{zhao2022soci} implemented secure multiplication on Paillier ciphertexts by employing smaller random numbers to conceal the real values of privacy data, thereby achieving a more efficient secure multiplication compared to the work \cite{liu2016privacy}.
To further enhance the efficiency of the secure multiplication demonstrated in the work \cite{zhao2022soci}, the work \cite{zhao2024soci+} presented a novel $(2,2)$-threshold Paillier cryptosystem and an offline and online mechanism.
Utilizing these building blocks, the work \cite{zhao2024soci+} presented a secure multiplication protocol faster than its predecessor in the work \cite{zhao2022soci}.
Despite significant development, all the aforementioned solutions necessitate considerable interactions between the two servers, leading to a substantial communication burden.

The emergence of HSS addresses the issue of excessive communication costs for secure multiplication on Paillier ciphertexts.
The first HSS scheme is constructed by Boyle \textit{et al.} \cite{boyle2016breaking}, which is based on the ElGamal encryption scheme with additive homomorphism.
Inspired by Boyle \textit{et al.} \cite{boyle2016breaking}, the work FGJS17 \cite{fazio2017homomorphic} built a Paillier-based HSS.
However, the DDLog function in FGJS17 \cite{fazio2017homomorphic} necessitates iterative computations to resolve the distributed discrete logarithm, and FGJS17 \cite{fazio2017homomorphic} claimed that its DDLog has a certain probability of error in computation.
To this end, OSY21 \cite{orlandi2021rise} introduced a Paillier-based HSS, and presented a perfectly correct and efficient DDLog.
Unfortunately, to achieve continuous computations of secure multiplication,  FGJS17 \cite{fazio2017homomorphic} and OSY21 \cite{orlandi2021rise} partitioned the private key $sk$ into multiple segments (e.g., $sk_1$, $sk_2$, ..., $sk_l$) and generated the ciphertexts such as $\llbracket sk_i \cdot x \rrbracket$, resulting in significant computation and storage overhead. 
To avoid private key partitioning and the creation of multiple ciphertexts, RS21 \cite{roy2021large} designed an HSS scheme based on the Damg{\aa}rd–Jurik public-key encryption scheme \cite{damgaard2001generalisation}.
However, the ciphertext space of RS21 \cite{roy2021large} is larger than that of standard Paillier cryptosystem, which also incurs substantial storage overhead.
Moreover, the above Paillier-Based HSS schemes suffer from a huge private key size and non-support for non-linear operation,
and we make a comparison between them and MORSE in Table \ref{Comparison of Different Paillier-Based HSS Scheme}.

\section{Preliminaries} \label{Section_3}
\subsection{Paillier Cryptosystem with Fast Encryption and Decryption}
The work \cite{ma2021optimized} presents a Paillier cryptosystem characterized by fast encryption and decryption, which is called FastPai within this work, and it consists of the following components.

\subsubsection{\textbf{N Generation} \texttt{(NGen)}}
\texttt{NGen} takes a security parameter $\kappa$ (e.g., $\kappa=128$) as input and outputs a tuple $(N=P\cdot Q,P,Q,p,q)$, such that $P$, $Q$ are primes with $\frac{n(\kappa)}{2}$-bit and $p$, $q$ are odd primes with $\frac{l(\kappa)}{2}$-bit,
where $l(\kappa)=4\kappa$.
Specifically, $n(\kappa)$ and $l(\kappa)$ denote the bit length of the modulus $N$ and the private key, respectively, corresponding to $\kappa$-bit security level.
Besides, it satisfies that $p|(P-1)$, $q|(Q-1)$, $P\equiv Q \equiv 3 \mod 4$, $\gcd(P-1,Q-1)=2$ and $\gcd(pq,(P-1)(Q-1)/(4pq))=1$.


\subsubsection{\textbf{Key generation} (\texttt{KeyGen})}\label{KeyGen of optimized Paillier}
\texttt{KeyGen} takes a security parameter $\kappa$ as input, and outputs a pair of private key and public key $(sk,pk)$.
\texttt{KeyGen} initially invokes \texttt{NGen} to retrieve a tuple $(N,P,Q,p,q)$, subsequently computes $\alpha = pq$ and $\beta = \frac{(P-1)(Q-1)}{4pq}$.
Then, \texttt{KeyGen} calculates $h = -y^{2\beta} \!\!\! \mod N$, where $y\leftarrow \mathbb{Z}_{N}^{*}$.
Finally, \texttt{KeyGen} outputs the public key $pk = (N,h)$ and the private key $sk = \alpha$.

\subsubsection{\textbf{Encryption} \texttt{(Enc)}}\label{Enc of optimized Paillier}
\texttt{Enc} takes a message $m \in {\mathbb{Z}}_{N}$ and a public key $pk = (N,h)$ as inputs and outputs a ciphertext $c \in {\mathbb{Z}}^{*}_{{N}^{2}}$. 
Formally, \texttt{Enc} is defined as
\begin{align}\label{encryption____________________________}
&c \leftarrow \texttt{Enc}(pk,m)={(1+N)}^m\cdot {(h^r \!\!\!\!\!\! \mod N)}^N  \!\!\!\!\!\! \mod N^2. 
\end{align}

In Eq. (\ref{encryption____________________________}), $r \leftarrow {\{0,1\}}^{l(\kappa)}$.
In this work, the notation $\llbracket x \rrbracket $ denotes an encryption of the value $x$.

\subsubsection{\textbf{Decryption} \texttt{(Dec)}}\label{Dec of optimized Paillier}
\texttt{Dec} takes a ciphertext $c \in {\mathbb{Z}}^{*}_{{N}^{2}}$ and a private key $sk = \alpha$ as inputs and outputs a plaintext message $m \in {\mathbb{Z}}_{N}$.
Formally, \texttt{Dec} is defined as
\begin{align}\label{decrytion_____aaaaaaaaaaaaaaaaaaaa_____}
&m \leftarrow \texttt{Dec}(sk,c)=L(c^{2\alpha}\ \!\!\!\!\!\! \mod\ N^2) \cdot{(2\alpha)}^{-1}\ \!\!\!\!\!\! \mod N.
\end{align}

In Eq. (\ref{decrytion_____aaaaaaaaaaaaaaaaaaaa_____}), the notation $L(x)$ denotes $\frac{x-1}{N}\mod N$.

FastPai supports additive homomorphism and scalar-multiplication homomorphism as demonstrated below.
\begin{itemize}
    \item 
    $\texttt{Dec}(sk,\llbracket m_1 \rrbracket \cdot \llbracket m_2 \rrbracket) = \texttt{Dec}(sk,\llbracket m_1+m_2 \rrbracket)$.
    \item
    $\texttt{Dec}(sk,{\llbracket m \rrbracket}^r) = \texttt{Dec}(sk,\llbracket r\cdot m \rrbracket)$, where $r$ is a constant. When $r = N-1$, ${\texttt{Dec}(sk,{\llbracket m \rrbracket}^r)} = -m$ holds.
\end{itemize}

Besides, to enable operations on a negative number, we convert a negative number $m$ into $N - \left | m \right | $ in this work. 
In this case, $[0,\frac{N}{2}]$ and $[\frac{N}{2}+1, N-1]$ denote the space of natural number and the one of negative number, respectively.

\subsection{Secret Sharing}
The scheme OSY21 \cite{orlandi2021rise} utilizes secret sharing to partition a secret $x$ into two shares ${\left \langle x \right \rangle}_0$ and ${\left \langle x \right \rangle}_1$, such that ${\left \langle x \right \rangle}_1-{\left \langle x \right \rangle}_0 = x$.
Specifically, for $x $ within the range $ [0,m-1]$, ${\left \langle x \right \rangle}_1$ is randomly chosen from the interval $[0,m2^\kappa-1]$, with $\kappa$ representing a statistical security parameter. 
${\left \langle x \right \rangle}_0$ is then defined as ${\left \langle x \right \rangle}_1-x$.

In this work, we employ a technique to facilitate operations on negative number, detailed as follows.
Assuming $x\in(-m,m)$, we select ${\left \langle x \right \rangle}_1$ from the interval $ [-m\cdot2^\kappa+1,m\cdot2^\kappa-1]$, and define ${\left \langle x \right \rangle}_0 = {\left \langle x \right \rangle}_1 - x$.

Since a secret data is recovered by subtraction, as same as OSY21 \cite{orlandi2021rise}, we call the above secret sharing as subtractive secret sharing in this work, and call the shares as subtractive shares.

\subsection{Share Conversion Algorithm} \label{DDLog}
The scheme OSY21 \cite{orlandi2021rise} proposes a share conversion algorithm called \texttt{DDLog\textsubscript{N}}$(g)$ to solve the distributed discrete log problem.
The purpose of \texttt{DDLog}\textsubscript{N}$(g)$ is to convert divisive shares into subtractive shares.
Specifically, let $g_0,g_1\in \mathbb{Z}_{N}^{*}$ be held by two different parties, satisfying $g_1 = g_0\cdot(1+N)^x \mod N^2$. 
Given $z_i = \texttt{DDLog\textsubscript{N}}(g_i) (i\in \{0,1\})$ held by two different parties, it follows that $z_1 - z_0 = x \mod N$. 
$g_1$ and $g_0$ are defined as divisive shares of $x$ since $\frac{g_1}{g_0} = 1+x\cdot N \mod N^2$, while $z_1$ and $z_0$ are defined as subtractive shares of $x$ since $z_1 - z_0 = x \mod N$.
Specifically, \texttt{DDLog}\textsubscript{N}$(g)$ consists of the following steps.
\begin{enumerate} [(a)]
    \item Define $h=g \!\! \mod N$ and $h' = \left \lfloor g/N \right \rfloor $.
    \item Output $z=h'\cdot h^{-1} \mod N$.
\end{enumerate}

In this work, we adopt \texttt{DDLog}\textsubscript{N}$(g)$ to construct MORSE.

\subsection{Key-Dependent Messages (KDM) Security}
If an adversary obtains the encrypted values of certain function $g(K)$ with respect to the key vector $K$ under a specific key $K_i$, and the encryption scheme maintains security in terms of indistinguishability, it is said to possess KDM security \cite{black2003encryption}.
Let $\mathcal{F}$ represent a set of functions with respect to secret keys, the work \cite{brakerski2010circular} defines the KDM\textsuperscript{($n$)} game between a challenger and a polynomial time adversary $\mathcal{A}$ as follows.
\begin{enumerate}[(a)]
    \item Initialization. The challenger randomly selects $b$ from $\{0,1\}$, and generates $n$ ($n>0$) pairs of keys, where $i$-th pair of key is denoted as ${\{pk_i, sk_i\}}_{1\leq i \leq n}$.
    Next, the challenger sends all ${\{pk_i\}}_{1\leq i \leq n}$ to the adversary.
    \item Query. The adversary makes queries of $(i, f)$, where $1\leq i \leq n$ and $f \in \mathcal{F}$. 
    For each query, the challenger sends the following ciphertext $c$ to the adversary. 
    If $b=0$, $c\leftarrow \texttt{Enc}(pk_i,f(sk_1,...,sk_n))$, otherwise ($b=1$), $c\leftarrow \texttt{Enc}(pk_i,0)$.
    \item Finish. The adversary outputs a guess $b' \in \{0,1\}$.
\end{enumerate}

The advantage for the adversary $\mathcal{A}$ is defined as
\begin{align} \label{advantage}
    &\advantage[\mathcal{A}]{\text{KDM\textsuperscript{($n$)}}} \nonumber \\
    &=|\prob{b'=1|b=0}-\prob{b'=1|b=1}|.
\end{align}

A public key encryption scheme is KDM\textsuperscript{($n$)} secure if $\advantage[\mathcal{A}]{\text{KDM\textsuperscript{($n$)}}}$ is negligible.
Inspired by the work OSY21 \cite{orlandi2021rise}, this work only considers one key pair (i.e., $n=1$).

\begin{figure}[!t]
    \centering
    \resizebox{0.47\textwidth}{!}{
    \includegraphics{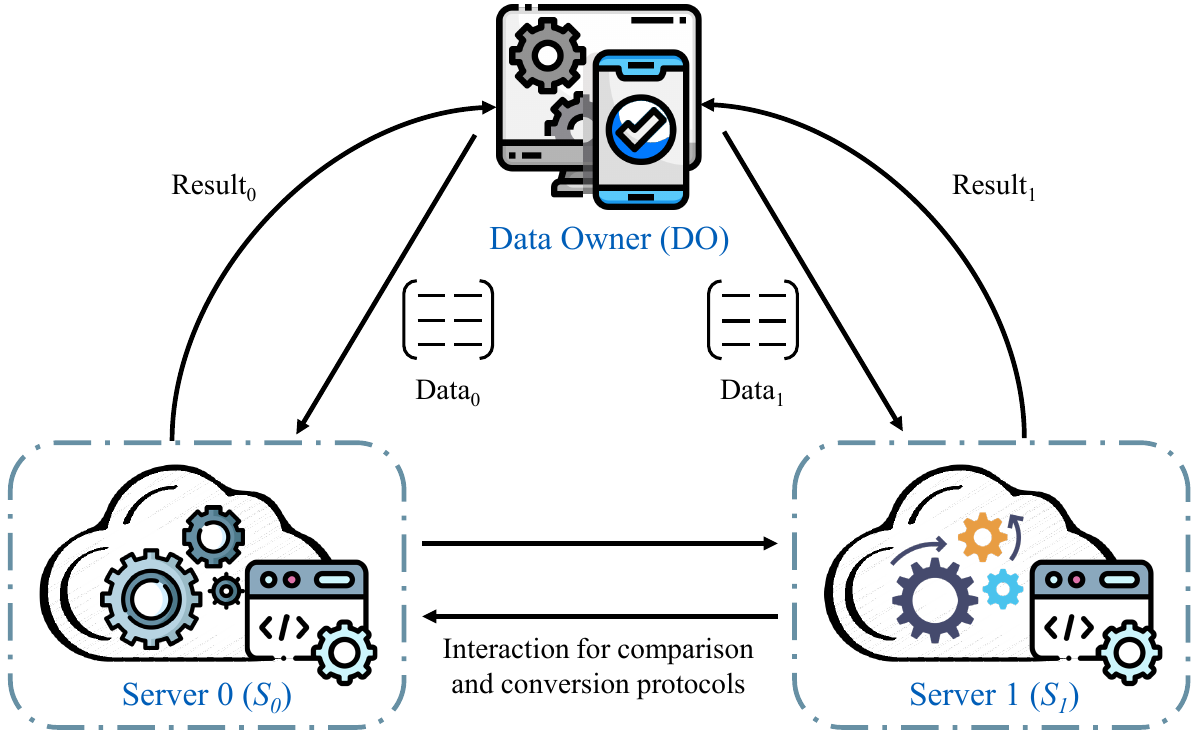}
    }
    \caption{The system model of MORSE}
    \label{system_model}
\end{figure}
\section{System model and threat model}\label{Section_4}

\subsection{System Model}
As depicted in Fig. \ref{system_model}, the system model of MORSE consists of a data owner, a server 0 and a server 1, and the two servers are denoted as $S_0$ and $S_1$, respectively.

\begin{itemize}
    \item \textbf{Data Owner (DO)}. 
    DO initializes FastPai \cite{ma2021optimized} cryptosystem, retains the private key $sk$, and publicly discloses the public key $pk$. 
    DO uploads data to $S_0$ and $S_1$, and receives the shares results from both $S_0$ and $S_1$ or ciphertexts results from $S_0$.
    \item \textbf{$\bm{S_0}$ and $\bm{S_1}$}.
    $S_0$ and $S_1$ provide storage and computing services for DO. 
    Specifically, for an uploaded data $x$, both $S_0$ and $S_1$ hold the same Paillier ciphertext $\llbracket x \rrbracket$, and they respectively hold the shares ${\left \langle 2\alpha\cdot x \right \rangle}_0$ and ${\left \langle 2\alpha\cdot x \right \rangle}_1$.
    They locally complete linear operations, and collaboratively perform non-linear operations and the conversions between Paillier ciphertexts and shares through interactions.
    $S_0$ can send the results in the form of both Paillier ciphertexts and shares to DO, but $S_1$ can only send the results in the form of shares to DO.
\end{itemize}

\subsection{Threat Model}
DO is considered as fully trusted in MORSE, and the potential adversaries may include $S_0$ and $S_1$.
In MORSE, $S_0$ and $S_1$ are regarded as semi-honest, i.e., they will correctly perform the requested protocols, but attempt to infer the private information of DO.  
We assume $S_0$ and $S_1$ are non-colluding, an assumption that is widely adopted in the schemes \cite{hu2023achieving}, \cite{huang2019lightweight} and \cite{xie2021achieving}. 
The non-colluding assumption implies that $S_0$ and $S_1$ do not disclose any information to each other unless the information needed for the execution of protocols.

Considering $S_0$ and $S_1$ originate from different cloud service providers, the non-colluding assumption is practical \cite{zhao2022soci}, \cite{xie2021achieving}.
In MORSE, the collusion between $S_0$ and $S_1$ means revealing their own shares of data to one another. 
If $S_0$ reveals its shares, it provides $S_1$ with evidence of data leakage, subsequently $S_1$ could invoke data privacy law to penalize $S_0$ and furthermore capture the market shares of $S_0$.
Conversely, if $S_1$ discloses its shares, $S_0$ can initiate similar measures.
Therefore, for the sake of their respective commercial interests, the two servers will not collude.

\section{MORSE Design} \label{Section_5}
\subsection{MORSE Initialization}
Before performing linear and non-linear operations, DO uploads data to both $S_0$ and $S_1$.
Specifically, for a data $x \in [-2^l,2^l]$ (e.g., $l=32$), DO performs the following operations.
    \begin{enumerate}[(a)]
        \item DO invokes \texttt{Enc} to encrypt $x$ as $\llbracket x \rrbracket$, and employs secret sharing to partition $2\alpha\cdot x$ into ${\left \langle 2\alpha\cdot x \right \rangle}_0$ and ${\left \langle 2\alpha\cdot x \right \rangle}_1$, where $\alpha$ is the private key of FastPai. 
        \item DO sends $\llbracket x \rrbracket$ to both $S_0$ and $S_1$.
        \item DO sends ${\left \langle 2\alpha\cdot x \right \rangle}_0$ and ${\left \langle 2\alpha\cdot x \right \rangle}_1$ to $S_0$ and $S_1$, respectively.
    \end{enumerate}
    
    Furthermore, to ensure MORSE correctly outputting the results, DO is required to perform the following operations.
    \begin{enumerate}[(a)]
        \item DO invokes \texttt{Enc} to encrypt ${(2\alpha)}^{-1} \mod N$, $0$ and $1$ into $\llbracket {(2\alpha)}^{-1} \rrbracket$, $\llbracket 0 \rrbracket$ and $\llbracket 1 \rrbracket$, respectively.
        \item 
        DO adopts secret sharing to split $2\alpha$ into ${\left \langle 2\alpha \right \rangle}_0$ and ${\left \langle 2\alpha \right \rangle}_1$.
        \item 
        DO sends $\llbracket {(2\alpha)}^{-1} \rrbracket$, $\llbracket 0 \rrbracket$, $\llbracket 1 \rrbracket$ to both $S_0$ and $S_1$, and sends ${\left \langle 2\alpha \right \rangle}_0$ and ${\left \langle 2\alpha \right \rangle}_1$ to $S_0$ and $S_1$, respectively.
    \end{enumerate}

    After receiving the data uploaded by DO, $S_0$ and $S_1$ construct tuples $Assisted_{S_0}$ and $Assisted_{S_1}$, respectively, where $Assisted_{S_0}=\{{\left \langle 2\alpha \right \rangle}_0, \llbracket {(2\alpha)}^{-1}  \rrbracket, \llbracket 0 \rrbracket, \llbracket 1 \rrbracket\}$ and $Assisted_{S_1}=\{{\left \langle 2\alpha \right \rangle}_1, \llbracket {(2\alpha)}^{-1} \rrbracket, \llbracket 0 \rrbracket, \llbracket 1 \rrbracket\}$.

\subsection{Linear Operation}
\label{SMUL}
MORSE enables linear operations including secure multiplication (\texttt{SMUL\textsubscript{HSS}}), secure addition, secure subtraction and scalar-multiplication.
In this subsection, we elaborate on secure multiplication.

As shown in Algorithm \ref{Secure Multiplication}, to perform secure multiplication between $x$ and $y$, \texttt{SMUL\textsubscript{HSS}} requires both $S_0$ and $S_1$ to input the same Paillier ciphertext $\llbracket x \rrbracket$.
Besides, $S_0$ and $S_1$ are required to input ${\left \langle 2\alpha \cdot y \right \rangle}_0$ and ${\left \langle 2\alpha \cdot y \right \rangle}_1$, respectively.
Finally, $S_0$ and $S_1$ obtain $z_0 = {\left \langle 2\alpha \cdot x \cdot y \right \rangle}_0$ and $z_1 = {\left \langle 2\alpha \cdot x \cdot y \right \rangle}_1$, respectively, such that $z_1 - z_0 = 2\alpha \cdot x \cdot y \mod N $.
Specifically, \texttt{SMUL\textsubscript{HSS}} consists of two steps as follows.
\begin{enumerate}
    \item $S_0$ and $S_1$ compute $g_0 \leftarrow {\llbracket x \rrbracket}^{{\left \langle 2\alpha \cdot y \right \rangle}_0 }$ and $g_1 \leftarrow {\llbracket x \rrbracket}^{{\left \langle 2\alpha \cdot y \right \rangle}_1 }$, respectively, such that $g_1 = g_0\cdot {\llbracket x \rrbracket}^{2\alpha \cdot y} \mod N^2$.
    Since $\frac{g_1}{g_0}={\llbracket x \rrbracket}^{2\alpha \cdot y} \mod N^2$, $g_0$ and $g_1$ refer to the divisive shares of $2\alpha\cdot x\cdot y$.
    \item  $S_0$ and $S_1$ convert the divisive shares into subtractive shares. 
    Specifically, $S_0$ and $S_1$ compute $z_0\leftarrow \texttt{DDLog\textsubscript{N}}(g_0)$ and $z_1\leftarrow \texttt{DDLog\textsubscript{N}}(g_1)$, respectively, such that $z_1 - z_0 = 2\alpha \cdot x \cdot y \mod N $.
\end{enumerate}

After receiving $z_0$ and $z_1$ from $S_0$ and $S_1$, respectively, DO obtains $x\cdot y$ by computing $(z_1-z_0)\cdot {(2\alpha)}^{-1} = x\cdot y \mod N$.

\begin{algorithm}[!ht]
    \caption{Secure Multiplication (\texttt{SMUL\textsubscript{HSS}})}
    \label{Secure Multiplication}
    \KwIn{$S_0$ inputs ($\llbracket x \rrbracket $, ${\left \langle 2\alpha \cdot y \right \rangle}_0$), and $S_1$ inputs ($\llbracket x \rrbracket $, ${\left \langle 2\alpha \cdot y \right \rangle}_1$).}
    \KwOut{$S_0$ outputs $z_0 = {\left \langle 2\alpha \cdot x \cdot y \right \rangle}_0$, and $S_1$ outputs $z_1= {\left \langle 2\alpha \cdot x \cdot y \right \rangle}_1$, such that $z_1 - z_0 = 2\alpha \cdot x \cdot y \mod N $.}
    \vspace{0.5em}
    \begin{mdframed}[backgroundcolor=cpcolor, innerleftmargin=2pt,innerrightmargin=2pt,innerbottommargin=2pt]
                    $\triangleright$ Step (\romannumeral1). $S_0$ and $S_1$ obtain divisive shares:
                    \begin{mdframed}[backgroundcolor=whilecolor,innertopmargin=4pt,innerbottommargin=4pt]
                    \begin{algorithmic}[1]
                        \item $S_0$ and $S_1$ compute $g_0 \leftarrow {\llbracket x \rrbracket}^{{\left \langle 2\alpha \cdot y \right \rangle}_0 }$ and $g_1 \leftarrow {\llbracket x \rrbracket}^{{\left \langle 2\alpha \cdot y \right \rangle}_1 }$, respectively, such that $g_1 = g_0\cdot {\llbracket x \rrbracket}^{2\alpha\cdot y} \mod N^2$.
                    \end{algorithmic}
                    \end{mdframed}
    \end{mdframed}
    \begin{mdframed}[backgroundcolor=cspcolor, innerleftmargin=2pt,innerrightmargin=2pt,innerbottommargin=2pt]
                    $\triangleright$ Step (\romannumeral2). 
                    $S_0$ and $S_1$ convert divisive shares into subtractive shares using \texttt{DDLog\textsubscript{N}}$(g)$:
                    \begin{mdframed}[backgroundcolor=whilecolor,innertopmargin=4pt,innerbottommargin=4pt]
                    \begin{algorithmic}[1]
                        \item $S_0$ and $S_1$ compute $z_0 \leftarrow \texttt{DDLog\textsubscript{N}}(g_0)$ and $z_1 \leftarrow \texttt{DDLog\textsubscript{N}}(g_1)$, respectively, such that $z_1 - z_0 = 2\alpha \cdot x \cdot y \mod N $.
                    \end{algorithmic}
                    \end{mdframed}
    \end{mdframed}
\end{algorithm}

In \texttt{SMUL\textsubscript{HSS}}, the most time-consuming operation is the modular exponential operation, where the exponent is a share involving the private key of FastPai.
Compared to  FGJS17 \cite{fazio2017homomorphic}, OSY21 \cite{orlandi2021rise} and RS21 \cite{roy2021large}, MORSE enjoys smallest share, which leads to faster modular exponential operation.
Besides, different from FGJS17 \cite{fazio2017homomorphic} and OSY21 \cite{orlandi2021rise}, since the shares in MORSE contain the private key of FastPai, MORSE does not require partitioning the private key into multiple segments and generating ciphertexts like $\llbracket sk_1 \cdot x \rrbracket, \llbracket sk_2 \cdot x \rrbracket, ..., \llbracket sk_l \cdot x \rrbracket$, where $l$ denotes the number of segments the private key is divided into.
Moreover, even the DDLog in OSY21 \cite{orlandi2021rise} enjoys perfect correctness, OSY21 \cite{orlandi2021rise} suffers from a negligible correctness error since it converts shares over $N$ into shares over integer.
MORSE discards this conversion to avoid potential computation error.

\label{modulus_reduction}
Assuming the multiplication $xy$ of $x$ and $y$ is at the interval $[0,2^l]$ (e.g., $l=32$), we can transform the modulus of $z_0$ and $z_1$ from $N$ to a smaller modulus $\theta$ to make next secure multiplication more efficient.
In this case, we simply let $\theta=2^{4\kappa+1+\kappa}$, where $\kappa$ is a security parameter.
After obtaining the outputs of \texttt{SMUL\textsubscript{HSS}}, i.e., $z_0$ and $z_1$, $S_0$ and $S_1$ compute $z_0 \leftarrow z_0 \mod \theta$ and $z_1 \leftarrow z_1 \mod \theta$, respectively.
Next, DO computes $z'\leftarrow z_1-z_0 \mod \theta$ and obtains $xy$ by computing $xy\leftarrow \frac{z'}{2\alpha}$.
Since $2\alpha\cdot xy < \theta$ and $\theta < N$, the correctness can be easily verified.

In MORSE, secure addition, secure subtraction and scalar-multiplication are supported by both FastPai \cite{ma2021optimized} and secret sharing.
For ciphertexts, the operations are performed as follows.
\begin{itemize}
    \item Addition.
    For ciphertexts $\llbracket x \rrbracket$ and $\llbracket y \rrbracket$, we compute $\llbracket x+y \rrbracket \leftarrow \llbracket x \rrbracket \cdot \llbracket y \rrbracket \mod N^2$.
    \item Subtraction.
    For ciphertexts $\llbracket x \rrbracket$ and $\llbracket y \rrbracket$, we compute $\llbracket x-y \rrbracket \leftarrow \llbracket x \rrbracket \cdot {\llbracket y \rrbracket}^{-1} \mod N^2$.
    \item Scalar-multiplication.
    For a ciphertext $\llbracket x \rrbracket$ and a constant $c$, we compute $\llbracket c\cdot x \rrbracket \leftarrow {\llbracket x \rrbracket}^{c}  \mod N^2$.
\end{itemize}

For shares, the operations are performed as follows.
\begin{itemize}
    \item Addition.
    For shares (${\left \langle x \right \rangle}_0$,${\left \langle y \right \rangle}_0$) and (${\left \langle x \right \rangle}_1$,${\left \langle y \right \rangle}_1$), we compute
    ${\left \langle x+y \right \rangle}_0={\left \langle x \right \rangle}_0+{\left \langle y \right \rangle}_0$ and ${\left \langle x+y \right \rangle}_1={\left \langle x \right \rangle}_1+{\left \langle y \right \rangle}_1$.
    \item Subtraction.
    For shares (${\left \langle x \right \rangle}_0$,${\left \langle y \right \rangle}_0$) and (${\left \langle x \right \rangle}_1$,${\left \langle y \right \rangle}_1$), we compute
    ${\left \langle x-y \right \rangle}_0={\left \langle x \right \rangle}_0-{\left \langle y \right \rangle}_0$ and ${\left \langle x-y \right \rangle}_1={\left \langle x \right \rangle}_1-{\left \langle y \right \rangle}_1$.
    \item Scalar-multiplication.
    For shares (${\left \langle x \right \rangle}_0$, ${\left \langle x \right \rangle}_1$) and a constant $c$, we compute
    ${\left \langle c\cdot x \right \rangle}_0=c\cdot {\left \langle x \right \rangle}_0$ and ${\left \langle c\cdot x \right \rangle}_1=c \cdot {\left \langle x \right \rangle}_1$. 
\end{itemize}


\subsection{Non-Linear Operation}
\label{SCMP}
The previous schemes FGJS17 \cite{fazio2017homomorphic}, OSY21 \cite{orlandi2021rise} and RS21 \cite{roy2021large} fail to support non-linear operation.
In this subsection, we present a secure comparison protocol (\texttt{SCMP\textsubscript{HSS}}).
As shown in Algorithm \ref{Secure Comparison}, in \texttt{SCMP\textsubscript{HSS}}, $S_0$ inputs $\llbracket x \rrbracket $ and $\llbracket y \rrbracket $, and only $S_0$ obtains the output $\llbracket \mu \rrbracket$. If $x\geq y$, $\mu=0$, otherwise ($x < y$), $\mu=1$.
Specifically, \texttt{SCMP\textsubscript{HSS}} consists of the following steps.
\begin{enumerate}
    \item $S_0$ randomly selects $r_1$ and $r_2$, where $r_1\leftarrow {\{0,1\}}^\sigma \setminus \{0\} (e.g., \sigma=128)$, $ r_2 \leq \frac{N}{2}$ and $r_1 + r_2 > \frac{N}{2}$. Next, $S_0$ takes out ${\left \langle 2\alpha \right \rangle}_0 ,\llbracket 0 \rrbracket $ and $\llbracket 1\rrbracket$ from $Assisted_{S_0}$.
    Subsequently, $S_0$ selects $\pi \leftarrow \{0,1\}$.
    If $\pi = 0$, $S_0$ computes $D\leftarrow {(\llbracket x \rrbracket \cdot {\llbracket  y \rrbracket }^{-1} \cdot \llbracket 1 \rrbracket) }^{r_1}$, otherwise ($\pi = 1$), computes $D\leftarrow {(\llbracket  y \rrbracket \cdot {\llbracket x \rrbracket }^{-1})}^{r_1}$.
    Then, $S_0$ computes $z_0\leftarrow \texttt{DDLog\textsubscript{N}}(D^{{\left \langle 2\alpha \right \rangle}_0})$ and $z_0 \leftarrow z_0-r_2$.
    Finally, $S_0$ sends $(D,z_0)$ to $S_1$.
    \item 
    $S_1$ takes out ${\left \langle 2\alpha \right \rangle}_1, \llbracket 0 \rrbracket $ and $\llbracket 1\rrbracket$ from $Assisted_{S_1}$.
    Subsequently, $S_1$ computes $z_1\leftarrow \texttt{DDLog\textsubscript{N}}(D^{{\left \langle 2\alpha \right \rangle}_1})$ and $d \leftarrow z_1-z_0 \mod N$.
    If $\pi = 0$, $d = 2\alpha\cdot r_1 \cdot (x-y+1)+r_2$, otherwise ($\pi=1$), $d = 2\alpha\cdot r_1 \cdot (y-x)+r_2$.
    If $d >\frac{N}{2}$, $S_1$ computes $\llbracket {\mu}_0 \rrbracket \leftarrow \llbracket 0 \rrbracket $, otherwise ($d \leq \frac{N}{2}$), $S_1$ computes $\llbracket {\mu}_0 \rrbracket \leftarrow \llbracket 1 \rrbracket $.
    Finally, $S_1$ sends $\llbracket {\mu}_0 \rrbracket$ to $S_0$.
    \item 
    If $\pi = 0$, $S_0$ computes $\llbracket \mu \rrbracket  \leftarrow \llbracket {\mu}_0 \rrbracket \cdot \llbracket 0 \rrbracket$, otherwise ($\pi = 1$), $S_0$ computes $\llbracket \mu \rrbracket  \leftarrow \llbracket 1 \rrbracket \cdot {\llbracket {\mu}_0\rrbracket }^{-1}$.    
    Finally, $S_0$ computes $\llbracket \mu \rrbracket \leftarrow \llbracket \mu \rrbracket \cdot \llbracket 0 \rrbracket$.
\end{enumerate}

\begin{algorithm}[!ht]
    \caption{Secure Comparison (\texttt{SCMP\textsubscript{HSS}})}
    \label{Secure Comparison}
    \KwIn{$S_0$ inputs $\llbracket x \rrbracket $ and $\llbracket y \rrbracket $.}
    \KwOut{$S_0$ outputs $\llbracket \mu \rrbracket$. If $x\geq y$, $\mu=0$, otherwise ($x < y$), $\mu=1$.}
    \vspace{0.5em}
    \begin{mdframed}[backgroundcolor=cpcolor, innerleftmargin=2pt,innerrightmargin=2pt,innerbottommargin=2pt]
                    $\triangleright$ Step (\romannumeral1). 
                    $S_0$ performs the following operations.
                    \begin{mdframed}[backgroundcolor=whilecolor,innertopmargin=4pt,innerbottommargin=4pt]
                    \begin{algorithmic}[1]
                    \item $S_0$ randomly selects $r_1$ and $r_2$, where $r_1\leftarrow {\{0,1\}}^\sigma \setminus \{0\} (e.g., \sigma=128)$, $ r_2 \leq \frac{N}{2}$ and $r_1 + r_2 > \frac{N}{2}$.
                    \item $S_0$ takes out ${\left \langle 2\alpha \right \rangle}_0 ,\llbracket 0 \rrbracket $ and $\llbracket 1\rrbracket$ from $Assisted_{S_0}$.
                    \item $S_0$ selects $\pi \leftarrow \{0,1\}$.
                    If $\pi = 0$, $S_0$ computes $D\leftarrow {(\llbracket x \rrbracket \cdot {\llbracket  y \rrbracket }^{-1} \cdot \llbracket 1 \rrbracket) }^{r_1}$, otherwise ($\pi = 1$), computes $D \leftarrow {(\llbracket  y \rrbracket \cdot {\llbracket x \rrbracket }^{-1})}^{r_1}$.
                    \item 
                    $S_0$ computes $z_0 \leftarrow \texttt{DDLog\textsubscript{N}}(D^{{\left \langle 2\alpha \right \rangle}_0})$ and $z_0 \leftarrow z_0-r_2$.
                    \item 
                    $S_0$ sends $(D,z_0)$ to $S_1$.
                    \end{algorithmic}
                    \end{mdframed}
    \end{mdframed}
    \begin{mdframed}[backgroundcolor=cspcolor, innerleftmargin=2pt,innerrightmargin=2pt,innerbottommargin=2pt]
                    $\triangleright$ Step (\romannumeral2). 
                    $S_1$ performs the following operations.
                    \begin{mdframed}[backgroundcolor=whilecolor,innertopmargin=4pt,innerbottommargin=4pt]
                    \begin{algorithmic}[1]
                    \item $S_1$ takes out ${\left \langle 2\alpha \right \rangle}_1, \llbracket 0 \rrbracket $ and $\llbracket 1\rrbracket$ from $Assisted_{S_1}$.
                    \item 
                    $S_1$ computes $z_1 \leftarrow \texttt{DDLog\textsubscript{N}}(D^{{\left \langle 2\alpha \right \rangle}_1})$ and $d \leftarrow z_1-z_0 \mod N$.
                    \item 
                    If $d >\frac{N}{2}$, $S_1$ computes $\llbracket {\mu}_0 \rrbracket \leftarrow \llbracket 0 \rrbracket $, otherwise ($d \leq \frac{N}{2}$), $S_1$ computes $\llbracket {\mu}_0 \rrbracket \leftarrow \llbracket 1 \rrbracket $.
                    \item 
                    $S_1$ sends $\llbracket {\mu}_0 \rrbracket$ to $S_0$.
                    \end{algorithmic}
                    \end{mdframed}
    \end{mdframed}
        \begin{mdframed}[backgroundcolor=cpcolor, innerleftmargin=2pt,innerrightmargin=2pt,innerbottommargin=2pt]
                    $\triangleright$ Step (\romannumeral3).
                    $S_0$ performs the following operations.
                    \begin{mdframed}[backgroundcolor=whilecolor,innertopmargin=4pt,innerbottommargin=4pt]
                    \begin{algorithmic}[1]
                   \item If $\pi = 0$, $S_0$ computes $\llbracket \mu \rrbracket  \leftarrow \llbracket {\mu}_0 \rrbracket$, otherwise ($\pi = 1$), $S_0$ computes $\llbracket \mu \rrbracket  \leftarrow \llbracket 1 \rrbracket \cdot {\llbracket {\mu}_0\rrbracket }^{-1}$.
                   \item 
                   $S_0$ computes $\llbracket \mu \rrbracket \leftarrow \llbracket \mu \rrbracket \cdot \llbracket 0 \rrbracket$.
                    \end{algorithmic}
                    \end{mdframed}
    \end{mdframed}
\end{algorithm}

\subsection{Conversion protocols} \label{conversion protocol}
The outputs of \texttt{SMUL\textsubscript{HSS}} are in the form of shares, whereas the inputs for \texttt{SCMP\textsubscript{HSS}} 
are in the form of ciphertexts, thus the outputs of \texttt{SMUL\textsubscript{HSS}} can not serve as the inputs for \texttt{SCMP\textsubscript{HSS}}.
Additionally, \texttt{SMUL\textsubscript{HSS}} accepts inputs in two forms, i.e., shares and ciphertexts.
However, since the output of \texttt{SCMP\textsubscript{HSS}} is solely in the form of ciphertexts, it can not be used as a share-input for \texttt{SMUL\textsubscript{HSS}}.
To address these challenges, we introduce two conversion protocols, i.e., a share to ciphertext (\texttt{S2C}) protocol and a ciphertext to share (\texttt{C2S}) protocol. 
One for converting shares into ciphertexts, and another for converting ciphertexts into shares.
With \texttt{S2C} and \texttt{C2S}, MORSE achieves continuous computations between \texttt{SMUL\textsubscript{HSS}} and \texttt{SCMP\textsubscript{HSS}}.

Assuming $S_0$ and $S_1$ have the outputs of \texttt{SMUL\textsubscript{HSS}}, i.e., ${\left \langle 2\alpha \cdot x \cdot y  \right \rangle}_0$ and ${\left \langle 2\alpha \cdot x \cdot y  \right \rangle}_1$, respectively, by using \texttt{S2C} protocol, both $S_0$ and $S_1$ obtain the same ciphertext $\llbracket x\cdot y \rrbracket$.
Specifically, \texttt{S2C} consists of the following steps.
\begin{enumerate}
    \item $S_0$ takes out $\llbracket {(2\alpha)}^{-1} \rrbracket$ from $Assisted_{S_0}$, and computes ${\left \langle {(2\alpha)}^{-1}\cdot (2\alpha) \cdot x \cdot y  \right \rangle}_0 \leftarrow \texttt{DDLog\textsubscript{N}}({\llbracket {(2\alpha)}^{-1} \rrbracket}^{{\left \langle 2\alpha \cdot x \cdot y  \right \rangle}_0})$.
    Subsequently, $S_0$ invokes \texttt{Enc} to encrypt ${\left \langle {(2\alpha)}^{-1}\cdot (2\alpha) \cdot x \cdot y  \right \rangle}_0$ into $\llbracket {\left \langle {(2\alpha)}^{-1}\cdot (2\alpha) \cdot x \cdot y  \right \rangle}_0 \rrbracket$, and sends $\llbracket {\left \langle {(2\alpha)}^{-1}\cdot (2\alpha) \cdot x \cdot y  \right \rangle}_0 \rrbracket$ to $S_1$.
    \item $S_1$ takes out $\llbracket {(2\alpha)}^{-1} \rrbracket$ from $Assisted_{S_1}$, and computes ${\left \langle {(2\alpha)}^{-1}\cdot (2\alpha) \cdot x \cdot y  \right \rangle}_1 \leftarrow \texttt{DDLog\textsubscript{N}}({\llbracket {(2\alpha)}^{-1} \rrbracket}^{{\left \langle 2\alpha \cdot x \cdot y  \right \rangle}_1})$.
    Subsequently, $S_1$ invokes \texttt{Enc} to encrypt ${\left \langle {(2\alpha)}^{-1}\cdot (2\alpha) \cdot x \cdot y  \right \rangle}_1$ into $\llbracket {\left \langle {(2\alpha)}^{-1}\cdot (2\alpha) \cdot x \cdot y  \right \rangle}_1 \rrbracket$, and sends $\llbracket {\left \langle {(2\alpha)}^{-1}\cdot (2\alpha) \cdot x \cdot y  \right \rangle}_1 \rrbracket$ to $S_0$.
    \item Both $S_0$ and $S_1$ compute $\llbracket x\cdot y \rrbracket \leftarrow \llbracket {\left \langle {(2\alpha)}^{-1}\cdot (2\alpha) \cdot x \cdot y  \right \rangle}_1 \rrbracket \cdot {\llbracket {\left \langle {(2\alpha)}^{-1}\cdot (2\alpha) \cdot x \cdot y  \right \rangle}_0 \rrbracket}^{-1}$.
\end{enumerate}

Assuming $S_0$ has the output of \texttt{SCMP\textsubscript{HSS}}, i.e., $\llbracket \mu \rrbracket$, by using $\texttt{C2S}$ protocol, $S_0$ and $S_1$ obtain 
${\left \langle 2\alpha \cdot \mu  \right \rangle}_0$ and ${\left \langle 2\alpha \cdot \mu  \right \rangle}_1$, respectively, such that ${\left \langle 2\alpha \cdot \mu  \right \rangle}_1 - {\left \langle 2\alpha \cdot \mu  \right \rangle}_0 = 2\alpha \cdot \mu \mod N$.
Specifically, $\texttt{C2S}$ consists of the following steps.
\begin{enumerate}
    \item $S_0$ sends $\llbracket \mu \rrbracket$ to $S_1$.
    \item $S_0$ takes out ${\left \langle 2\alpha \right \rangle}_0$ from $Assisted_{S_0}$ and computes ${\left \langle 2\alpha \cdot \mu  \right \rangle}_0 \leftarrow \texttt{DDLog\textsubscript{N}}({\llbracket \mu \rrbracket}^{{\left \langle 2\alpha \right \rangle}_0})$.
    \item $S_1$ takes out ${\left \langle 2\alpha \right \rangle}_1$ from $Assisted_{S_1}$ and computes ${\left \langle 2\alpha \cdot \mu  \right \rangle}_1 \leftarrow \texttt{DDLog\textsubscript{N}}({\llbracket \mu \rrbracket}^{{\left \langle 2\alpha \right \rangle}_1})$. 
\end{enumerate}

\section{Correctness and Security Analyses}\label{Section_6}
In this Section, through rigorous analyses, we show that MORSE securely outputs correct results.
\subsection{Correctness Analysis}
\begin{theorem} \label{theorem_SMUL}
Given ($\llbracket x \rrbracket $, ${\left \langle 2\alpha \cdot y \right \rangle}_0$) and ($\llbracket x \rrbracket $, ${\left \langle 2\alpha \cdot y \right \rangle}_1$), where $x,y \in [-2^l, 2^l]$ (e.g., $l=32$) and ${\left \langle 2\alpha \cdot y \right \rangle}_1-{\left \langle 2\alpha \cdot y \right \rangle}_0 = 2\alpha \cdot y \mod N$, $S_0$ and $S_1$ correctly output $z_0$ and $z_1$ through \texttt{SMUL\textsubscript{HSS}}, respectively, such that $z_1 - z_0 = 2\alpha \cdot x \cdot y \mod N $.
\end{theorem}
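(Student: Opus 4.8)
The plan is to unfold the definitions of \texttt{Enc} and \texttt{DDLog\textsubscript{N}}, reduce the whole claim to a single algebraic identity modulo $N^2$, and then invoke the correctness of \texttt{DDLog\textsubscript{N}} stated in Section~\ref{DDLog}. First I would write the ciphertext in split form $\llbracket x \rrbracket = (1+N)^{x}\cdot u^{N} \bmod N^2$, where $u := h^{r}\bmod N \in \mathbb{Z}_{N}^{*}$ is the fixed blinding factor. Since $S_0$ and $S_1$ raise the \emph{same} ciphertext to their respective shares, the quotient of their divisive shares telescopes: $g_1/g_0 = \llbracket x \rrbracket^{\,\langle 2\alpha\cdot y\rangle_1 - \langle 2\alpha\cdot y\rangle_0}\bmod N^2$. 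Taking the exponent to be the exact integer $2\alpha\cdot y$ produced by the subtractive-sharing construction and substituting the split form gives $g_1/g_0 = (1+N)^{2\alpha\cdot x\cdot y}\cdot u^{2\alpha\cdot y\cdot N}\bmod N^2$.

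The core of the argument is then to show that the blinding factor disappears, i.e. $u^{2\alpha\cdot y\cdot N}\equiv 1\bmod N^2$, so that $g_1/g_0 = (1+N)^{2\alpha\cdot x\cdot y}\bmod N^2$; this is exactly the exponentiated form of FastPai's decryption correctness. I would prove it in two moves. Recalling from \texttt{KeyGen} that $h\equiv -y_{0}^{2\beta}\bmod N$ for the random base $y_{0}\in\mathbb{Z}_{N}^{*}$, together with $\alpha=pq$ and $\beta=(P-1)(Q-1)/(4pq)$, so that $4\alpha\beta=\phi(N)$, one gets $h^{2\alpha}=y_{0}^{4\alpha\beta}=y_{0}^{\phi(N)}\equiv 1\bmod N$ (the sign vanishes since $2\alpha$ is even), hence $u^{2\alpha}=h^{2\alpha r}\equiv 1\bmod N$. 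Writing $u^{2\alpha}=1+kN$ and raising to the $N$-th power, the binomial expansion kills every term beyond the first modulo $N^2$, yielding $u^{2\alpha N}\equiv 1\bmod N^2$; raising this to the integer power $y$ (valid for either sign of $y$ because $u$ is a unit) gives the claim.

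Finally, with $g_1 = g_0\cdot(1+N)^{2\alpha\cdot x\cdot y}\bmod N^2$ established, I would apply the correctness of \texttt{DDLog\textsubscript{N}} from Section~\ref{DDLog} with its exponent instantiated as $2\alpha\cdot x\cdot y$: since $z_i\leftarrow\texttt{DDLog\textsubscript{N}}(g_i)$, it follows that $z_1-z_0 = 2\alpha\cdot x\cdot y\bmod N$, as claimed. I expect the blinding-cancellation step to be the main obstacle, as it is the only place the concrete structure of the FastPai modulus and private key enters, and it needs the non-obvious lift from a congruence modulo $N$ to one modulo $N^2$. One subtlety I would address explicitly is why the exponent may be treated as the \emph{exact} integer $2\alpha\cdot y$ rather than merely its residue modulo $N$: reducing only modulo $N$ is harmless for the $(1+N)$-component that \texttt{DDLog\textsubscript{N}} ultimately reads off, because $(1+N)^{N}\equiv 1\bmod N^2$, but it would leave a spurious power $u^{N^2}$ of the blinding factor that need not be trivial, so the exact integer difference of the shares is what guarantees the clean cancellation.
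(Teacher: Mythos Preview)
Your proposal is correct and follows the same logical skeleton as the paper: establish $g_1 = g_0\cdot(1+N)^{2\alpha xy}\bmod N^2$ and then invoke the correctness of \texttt{DDLog\textsubscript{N}}. The only difference is granularity: the paper simply quotes the FastPai identity $c^{2\alpha}\equiv(1+N)^{2\alpha m}\bmod N^2$ as a known decryption property, whereas you unpack it explicitly via $h^{2\alpha}\equiv 1\bmod N$ and the binomial lift to $N^2$, and you additionally flag the exact-integer-versus-mod-$N$ subtlety about the share difference that the paper leaves implicit.
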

\begin{proof}
In \texttt{SMUL\textsubscript{HSS}}, $S_0$ and $S_1$ compute $g_0 \leftarrow {\llbracket x \rrbracket}^{{\left \langle 2\alpha \cdot y \right \rangle}_0 }$ and $g_1 \leftarrow {\llbracket x \rrbracket}^{{\left \langle 2\alpha \cdot y \right \rangle}_1 }$, respectively.
Since ${\left \langle 2\alpha \cdot y \right \rangle}_1-{\left \langle 2\alpha \cdot y \right \rangle}_0 = 2\alpha \cdot y \mod N$, $g_1 = g_0\cdot {\llbracket x \rrbracket}^{2\alpha \cdot y} \mod N^2$ holds.
In FastPai \cite{ma2021optimized}, given a private key $\alpha$, a ciphertext $c$ and its corresponding plaintext $m$, $c^{2\alpha} \!\! \mod N^2 = (1+N)^{2\alpha m} $ holds. 
Thus, we have
\begin{align}
    g_1 = g_0\cdot {(1+N)}^{2\alpha\cdot x \cdot y} \!\!\! \mod N^2.
\end{align}

Since \texttt{DDLog}\textsubscript{N}$(g)$ in OSY21 \cite{orlandi2021rise} can correctly convert divisive shares into subtractive shares, it is easy to see that $S_0$ and $S_1$ can correctly compute $z_0 \leftarrow \texttt{DDLog\textsubscript{N}}(g_0)$ and $z_1 \leftarrow \texttt{DDLog\textsubscript{N}}(g_1)$, respectively, such that $z_1 - z_0 = 2\alpha \cdot x \cdot y \mod N $.
\end{proof}

\begin{theorem}
    Given $\llbracket x \rrbracket$ and $\llbracket y \rrbracket$, where $x,y \in [-2^l,2^l]$ (e.g., $l=32$), if $x\geq y$, \texttt{SCMP\textsubscript{HSS}} outputs $\llbracket 0 \rrbracket$, otherwise ($x<y$), \texttt{SCMP\textsubscript{HSS}} outputs $\llbracket 1 \rrbracket$.
\end{theorem}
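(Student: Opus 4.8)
The plan is to follow the plaintext carried by the ciphertext $D$ through Steps~(i)--(iii) and reduce the concluding test $d \gtrless \tfrac{N}{2}$ to a sign test. First I would decode $D$ using the additive, subtractive, and scalar-multiplication homomorphisms of FastPai. When $\pi=0$, $D = {(\llbracket x \rrbracket \cdot {\llbracket y \rrbracket}^{-1} \cdot \llbracket 1 \rrbracket)}^{r_1} = \llbracket r_1(x-y+1) \rrbracket$, and when $\pi=1$, $D = {(\llbracket y \rrbracket \cdot {\llbracket x \rrbracket}^{-1})}^{r_1} = \llbracket r_1(y-x) \rrbracket$. In both cases write the plaintext of $D$ as $m_D = r_1 s$, where $s = x-y+1$ for $\pi=0$ and $s = y-x$ for $\pi=1$; note $s$ is an integer with $s\geq 1 \Leftrightarrow x\geq y$ when $\pi=0$, and $s\geq 1 \Leftrightarrow x<y$ when $\pi=1$.

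Next I would derive the value $d$ recovered by $S_1$. Since ${\left \langle 2\alpha \right \rangle}_1 - {\left \langle 2\alpha \right \rangle}_0 = 2\alpha$, the pair $g_0 = D^{{\left \langle 2\alpha \right \rangle}_0}$ and $g_1 = D^{{\left \langle 2\alpha \right \rangle}_1}$ satisfies $g_1/g_0 = D^{2\alpha} = {(1+N)}^{2\alpha\, m_D} \bmod N^2$, using the FastPai identity $c^{2\alpha} \equiv {(1+N)}^{2\alpha m} \bmod N^2$ already invoked in the proof of Theorem~\ref{theorem_SMUL}. Hence the correctness of \texttt{DDLog\textsubscript{N}} yields $\texttt{DDLog\textsubscript{N}}(g_1) - \texttt{DDLog\textsubscript{N}}(g_0) = 2\alpha\, m_D \bmod N$, and after $S_0$ subtracts $r_2$ from its share we obtain $d = z_1 - z_0 = 2\alpha r_1 s + r_2 \bmod N$, which reproduces the two expressions stated in Algorithm~\ref{Secure Comparison}.

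The heart of the argument, and the step I expect to be the main obstacle, is proving that the single test $d \gtrless \tfrac{N}{2}$ reports the sign of $s$ with no spurious modular wraparound. Using $x,y \in [-2^l,2^l]$ together with the bit lengths of $2\alpha$ (about $4\kappa$ bits) and of $r_1$ (about $\sigma$ bits), I would first bound $2\alpha r_1 |s| \leq 2\alpha r_1 (2^{l+1}+1)$ and check that this stays far below $\tfrac{N}{2}-r_1$, so no quantity ever reaches $N$. The two mask constraints then do the decisive work. If $s\geq 1$, then $2\alpha r_1 s \geq 2\alpha r_1 \geq r_1$ (as $2\alpha\geq 1$), so $d = 2\alpha r_1 s + r_2 \geq r_1 + r_2 > \tfrac{N}{2}$, giving $\mu_0 = 0$. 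If $s=0$, then $d = r_2 \leq \tfrac{N}{2}$; and if $s\leq -1$, then $0 < d = r_2 - 2\alpha r_1 |s| < r_2 \leq \tfrac{N}{2}$ because $2\alpha r_1 |s| < \tfrac{N}{2}-r_1 < r_2$. In both of these latter cases $d \leq \tfrac{N}{2}$ and $\mu_0 = 1$. Thus $\mu_0 = 0 \Leftrightarrow s \geq 1$ and $\mu_0 = 1 \Leftrightarrow s \leq 0$; the constraint $r_2 \leq \tfrac{N}{2}$ pins the $s\leq 0$ side, while $r_1+r_2 > \tfrac{N}{2}$ pins the boundary $s=1$ on the correct side.

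Finally I would translate $\mu_0$ into the claimed output by unwinding Step~(iii), noting that the trailing multiplication by $\llbracket 0 \rrbracket$ only re-randomizes and leaves the plaintext fixed. For $\pi=0$ we have $\mu = \mu_0$ and $s\geq 1 \Leftrightarrow x\geq y$, so $x\geq y \Rightarrow \mu_0 = 0 \Rightarrow \llbracket \mu \rrbracket = \llbracket 0 \rrbracket$ and $x<y \Rightarrow \mu_0 = 1 \Rightarrow \llbracket \mu \rrbracket = \llbracket 1 \rrbracket$. For $\pi=1$ we have $\mu = 1-\mu_0$ and $s\geq 1 \Leftrightarrow x<y$, so $x\geq y \Rightarrow s\leq 0 \Rightarrow \mu_0 = 1 \Rightarrow \llbracket \mu \rrbracket = \llbracket 0 \rrbracket$ and $x<y \Rightarrow \mu_0 = 0 \Rightarrow \llbracket \mu \rrbracket = \llbracket 1 \rrbracket$. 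In either branch the output equals $\llbracket 0 \rrbracket$ exactly when $x\geq y$ and $\llbracket 1 \rrbracket$ otherwise, which is the claim; I would close by remarking that the $+1$ shift in the $\pi=0$ branch is precisely what converts the strict test into the non-strict relation $x\geq y$, and that the equality case $x=y$ is handled correctly in both branches ($s=1$ for $\pi=0$ and $s=0$ for $\pi=1$, each yielding $\llbracket 0 \rrbracket$).
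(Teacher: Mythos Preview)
Your proposal is correct and follows essentially the same approach as the paper: derive the closed form $d = 2\alpha r_1 s + r_2 \bmod N$ for $s \in \{x-y+1,\,y-x\}$, use the parameter sizes to rule out modular wraparound, and conclude that the comparison $d \gtrless \tfrac{N}{2}$ detects the sign of $s$. Your write-up is in fact more detailed than the paper's own proof, which omits the explicit derivation of $d$ via the FastPai identity and \texttt{DDLog\textsubscript{N}}, does not spell out the case split $s\geq 1$ versus $s\leq 0$ that justifies its ``i.e.'' implications, and leaves Step~(iii) (the $\pi$-dependent translation from $\mu_0$ to $\mu$) implicit; you handle all of these points.
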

\begin{proof}
    In \texttt{SCMP\textsubscript{HSS}}, since $x,y \in [-2^l,2^l]$, we have $x-y+1 \in [-2^{l+1}+1,2^{l+1}+1]$ and $y-x \in [-2^{l+1},2^{l+1}]$, i.e., $x-y+1,y-x \in [-2^{l+1},2^{l+1}+1]$.
    Due to $r_1\leftarrow {\{0,1\}}^\sigma \setminus \{0\}$, $ r_2 \leq \frac{N}{2}$ and $r_1 + r_2 > \frac{N}{2}$, we have $2\alpha \cdot r_1 \cdot (x-y+1)+r_2, 2\alpha \cdot r_1 \cdot (y-x)+r_2 \in [\frac{N}{2}-2^\sigma-2^{l+1+\sigma + l(\kappa)+1},\frac{N}{2}+2^{l+1+\sigma + l(\kappa)+1}+2^{\sigma+l(\kappa)+1}-2^\sigma]$, where $l(\kappa)$ represents the bit-length of the private key $\alpha$ under $\kappa$-bit security level.
    Since $l(\kappa) \gg l$ and $\frac{N}{2} \gg 2^{l+1+\sigma + l(\kappa)+1}$, we have $0<2\alpha \cdot r_1 \cdot (x-y+1)+r_2<N$ and $0<2\alpha \cdot r_1 \cdot (y-x)+r_2<N$.

    If $0<2\alpha \cdot r_1 \cdot (x-y+1)+r_2\leq \frac{N}{2}$, i.e., $x-y+1<=0$, we have $x<y$ and \texttt{SCMP\textsubscript{HSS}} outputs $\llbracket 1 \rrbracket$. If $\frac{N}{2}<2\alpha \cdot r_1 \cdot (x-y+1)+r_2<N$, i.e., $x-y+1>=1$, we have $x\geq y$ and \texttt{SCMP\textsubscript{HSS}} outputs $\llbracket 0 \rrbracket$.

    If $0< 2\alpha \cdot r_1 \cdot (y-x)+r_2\leq \frac{N}{2}$, i.e., $y-x\leq 0$, we have $x\geq y$ and \texttt{SCMP\textsubscript{HSS}} outputs $\llbracket 0 \rrbracket$.
    If $\frac{N}{2} < 2\alpha \cdot r_1 \cdot (y-x)+r_2 < N$, i.e., $y-x\geq 1$, we have $x<y$ and \texttt{SCMP\textsubscript{HSS}} outputs $\llbracket 1 \rrbracket$.

    Thus, \texttt{SCMP\textsubscript{HSS}} compares $x$ and $y$ correctly.
\end{proof}

\begin{theorem}
Given the outputs of \texttt{SMUL\textsubscript{HSS}}, i.e., the  shares $z_0 = {\left \langle 2\alpha \cdot x \cdot y \right \rangle}_0$ and $z_1 = {\left \langle 2\alpha \cdot x \cdot y \right \rangle}_1$, the share to ciphertext protocol (\texttt{S2C}) in Section \ref{conversion protocol} correctly converts shares into ciphertexts.     
\end{theorem}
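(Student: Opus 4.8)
The plan is to verify \texttt{S2C} in three stages that mirror its three steps. Writing $c = \llbracket {(2\alpha)}^{-1} \rrbracket$ and $z_i = {\left \langle 2\alpha \cdot x \cdot y \right \rangle}_i$ (so that $z_1 - z_0 = 2\alpha \cdot x \cdot y \bmod N$), I would first show that $g_0 = c^{z_0}$ and $g_1 = c^{z_1}$ are divisive shares of $x\cdot y$; then invoke the correctness of \texttt{DDLog\textsubscript{N}} from Section~\ref{DDLog} to turn them into subtractive shares $w_0$ and $w_1$ of $x \cdot y$; and finally confirm that the homomorphic recombination performed in Step~(3) recovers $\llbracket x\cdot y \rrbracket$.

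The crux lies in the first stage, and its engine is the identity $c^{2\alpha} \equiv (1+N) \bmod N^2$. This follows by applying the FastPai exponentiation relation $c^{2\alpha} \equiv {(1+N)}^{2\alpha m} \bmod N^2$ (used in the proof of Theorem~\ref{theorem_SMUL}) to the plaintext $m = {(2\alpha)}^{-1} \bmod N$: since $2\alpha \cdot {(2\alpha)}^{-1} \equiv 1 \bmod N$ and ${(1+N)}^{N} \equiv 1 \bmod N^2$, the exponent collapses to $1$. This is exactly the step that annihilates the random $N$-th-power component of $c$. Raising $c$ to the difference of the two shares then gives $g_1 / g_0 = c^{z_1 - z_0} = {(c^{2\alpha})}^{x\cdot y} \equiv {(1+N)}^{x \cdot y} \bmod N^2$, which is precisely the clean divisive-share form required by \texttt{DDLog\textsubscript{N}}.

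Granting this, the correctness of \texttt{DDLog\textsubscript{N}} immediately yields $w_1 - w_0 \equiv x\cdot y \bmod N$, so $w_0$ and $w_1$ are subtractive shares of $x\cdot y$. For the last stage I would note that $S_0$ and $S_1$ encrypt $w_0$ and $w_1$ and compute $\llbracket w_1 \rrbracket \cdot {\llbracket w_0 \rrbracket}^{-1} \equiv \llbracket w_1 - w_0 \rrbracket \bmod N^2$ by the additive homomorphism of FastPai; because $w_1 - w_0 \equiv x\cdot y \bmod N$ and $x\cdot y$ stays within the admissible plaintext range under the negative-number encoding of Section~\ref{Dec of optimized Paillier}, this recombines to $\llbracket x\cdot y \rrbracket$, as claimed.

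The step I expect to be the main obstacle is justifying the collapse $c^{z_1 - z_0} \equiv {(1+N)}^{x\cdot y} \bmod N^2$ rigorously. The delicate point is that $z_1 - z_0 \equiv 2\alpha\cdot x\cdot y \bmod N$ is only a congruence modulo $N$, whereas writing $c^{z_1-z_0}$ as ${(c^{2\alpha})}^{x\cdot y}$ requires $z_1 - z_0$ to be the genuine integer multiple $2\alpha \cdot (x\cdot y)$, so that the $N$-th-power randomness of $c$ cancels exactly rather than leaving a residual factor. The argument must therefore appeal to the parameter sizes: with $2\alpha$ of bit-length about $l(\kappa)$, $x\cdot y$ of bit-length at most $2l+1$, and $N$ of bit-length $n(\kappa) \gg l(\kappa)+2l$, the product $2\alpha\cdot x\cdot y$ remains far below $N$, so the shares output by \texttt{SMUL\textsubscript{HSS}} satisfy $z_1 - z_0 = 2\alpha\cdot x\cdot y$ with no wrap-around and the cancellation is exact. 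Pinning down this no-wraparound condition is where I would spend the most care.
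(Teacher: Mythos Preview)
Your argument follows the same route as the paper's own proof: recognize that the first two steps of \texttt{S2C} are an instance of \texttt{SMUL\textsubscript{HSS}} with ciphertext $\llbracket (2\alpha)^{-1} \rrbracket$ and input shares $z_0,z_1$, so Theorem~\ref{theorem_SMUL} yields subtractive shares of $(2\alpha)^{-1}\cdot 2\alpha \cdot xy = xy$, and then the additive homomorphism of FastPai recombines the encrypted shares into $\llbracket xy \rrbracket$. Your write-up is considerably more explicit than the paper's, which dispatches the whole thing in two sentences by citing Theorem~\ref{theorem_SMUL} and the homomorphism; the identity $c^{2\alpha}\equiv 1+N \bmod N^2$ that you isolate is precisely the mechanism behind that citation.

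One remark on the obstacle you single out. The size bound $|2\alpha\cdot xy|\ll N$ does not by itself give the no-wraparound conclusion you state: the outputs $z_0,z_1$ of \texttt{DDLog\textsubscript{N}} lie in $[0,N)$, so their \emph{integer} difference lies in $(-N,N)$ and is only guaranteed congruent to $2\alpha\cdot xy$ modulo $N$; even with $|2\alpha\cdot xy|$ tiny one can have $z_1-z_0=2\alpha\cdot xy\pm N$. The paper sidesteps this entirely by invoking Theorem~\ref{theorem_SMUL} as a black box (whose hypothesis is already phrased modulo $N$), so at the paper's level of rigor your argument is complete; if you want to close the point fully, the honest route is to analyze the residual factor $c^{kN}\bmod N^2$ rather than to claim $k=0$.
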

\begin{proof}
    According to Theorem \ref{theorem_SMUL}, ${\left \langle {(2\alpha)}^{-1}\cdot (2\alpha) \cdot x \cdot y  \right \rangle}_0$ and ${\left \langle {(2\alpha)}^{-1}\cdot (2\alpha) \cdot x \cdot y  \right \rangle}_1$ can be correctly obtained by computing $\texttt{DDLog\textsubscript{N}}({\llbracket {(2\alpha)}^{-1} \rrbracket}^{{\left \langle 2\alpha \cdot x \cdot y  \right \rangle}_0})$ and $\texttt{DDLog\textsubscript{N}}({\llbracket {(2\alpha)}^{-1} \rrbracket}^{{\left \langle 2\alpha \cdot x \cdot y  \right \rangle}_1})$, respectively.
    In addiction, as the additive homomorphism and scalar-multiplication homomorphism of FastPai \cite{ma2021optimized}, it is easy to obtain $\llbracket x\cdot y \rrbracket$ by computing $\llbracket {\left \langle {(2\alpha)}^{-1}\cdot (2\alpha) \cdot x \cdot y  \right \rangle}_1 \rrbracket \cdot {\llbracket {\left \langle {(2\alpha)}^{-1}\cdot (2\alpha) \cdot x \cdot y  \right \rangle}_0 \rrbracket}^{-1}$.

    Thus, \texttt{S2C} correctly converts shares into ciphertexts.
\end{proof}

\begin{theorem}
    Given the output of \texttt{SCMP\textsubscript{HSS}}, i.e., a ciphertext $\llbracket \mu \rrbracket$, the ciphertext to share protocol (\texttt{C2S}) in Section \ref{conversion protocol} correctly converts a ciphertext into shares.
\end{theorem}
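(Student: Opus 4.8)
The plan is to mirror the correctness argument of Theorem \ref{theorem_SMUL}, since \texttt{C2S} reuses the same divisive-to-subtractive share mechanism, only with the exponent being the subtractive shares of $2\alpha$ rather than of $2\alpha \cdot y$. First I would observe that after Step 1 both $S_0$ and $S_1$ hold the same ciphertext $\llbracket \mu \rrbracket$, and that they raise it to their respective shares ${\left \langle 2\alpha \right \rangle}_0$ and ${\left \langle 2\alpha \right \rangle}_1$ to form $g_0 = {\llbracket \mu \rrbracket}^{{\left \langle 2\alpha \right \rangle}_0}$ and $g_1 = {\llbracket \mu \rrbracket}^{{\left \langle 2\alpha \right \rangle}_1}$. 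Because DO distributes ${\left \langle 2\alpha \right \rangle}_0$ and ${\left \langle 2\alpha \right \rangle}_1$ as subtractive shares of $2\alpha$, we have ${\left \langle 2\alpha \right \rangle}_1 - {\left \langle 2\alpha \right \rangle}_0 = 2\alpha$, so that $g_1 = g_0 \cdot {\llbracket \mu \rrbracket}^{2\alpha} \mod N^2$.

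Next I would invoke the FastPai identity already exploited in Theorem \ref{theorem_SMUL}: for any ciphertext $c$ encrypting a plaintext $m$ under private key $\alpha$, $c^{2\alpha} \mod N^2 = (1+N)^{2\alpha m}$. Applying this with $c = \llbracket \mu \rrbracket$ and $m = \mu$ yields ${\llbracket \mu \rrbracket}^{2\alpha} = (1+N)^{2\alpha \mu} \mod N^2$, and hence $g_1 = g_0 \cdot (1+N)^{2\alpha \mu} \mod N^2$. This is precisely the divisive-share relation required as input to \texttt{DDLog\textsubscript{N}}, with the hidden value being $2\alpha \mu$.

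Finally I would appeal to the perfect correctness of \texttt{DDLog\textsubscript{N}} from OSY21 \cite{orlandi2021rise}: given divisive shares $g_0, g_1$ of a value $x$ with $g_1 = g_0 \cdot (1+N)^x \mod N^2$, the outputs $z_i = \texttt{DDLog\textsubscript{N}}(g_i)$ satisfy $z_1 - z_0 = x \mod N$. Specializing to $x = 2\alpha \mu$ gives ${\left \langle 2\alpha \cdot \mu \right \rangle}_1 - {\left \langle 2\alpha \cdot \mu \right \rangle}_0 = 2\alpha \cdot \mu \mod N$, which is exactly the required subtractive-share relation, so \texttt{C2S} correctly converts the ciphertext $\llbracket \mu \rrbracket$ into shares.

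I do not expect a genuine obstacle, as the argument is a direct specialization of Theorem \ref{theorem_SMUL}; the output lands in $\mathbb{Z}_N$ so no wrap-around correction is needed. The only point deserving care is to confirm that the exponent supplied to \texttt{DDLog\textsubscript{N}} is the share of $2\alpha$ (not of $2\alpha \cdot \mu$), since it is this choice that makes the FastPai identity collapse ${\llbracket \mu \rrbracket}^{2\alpha}$ into $(1+N)^{2\alpha \mu}$ and thereby surface the plaintext factor $\mu$ inside the exponent; once this is checked, every downstream step follows verbatim from the share-conversion correctness.
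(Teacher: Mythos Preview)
Your proposal is correct and follows essentially the same approach as the paper: both reduce \texttt{C2S} to the correctness of \texttt{SMUL\textsubscript{HSS}} established in Theorem \ref{theorem_SMUL}. The paper is simply more concise, observing in one line that \texttt{C2S} is nothing but \texttt{SMUL\textsubscript{HSS}} applied with $x=\mu$ and $y=1$ (so that the share input is ${\left\langle 2\alpha \cdot 1\right\rangle}_i = {\left\langle 2\alpha\right\rangle}_i$) and then invoking Theorem \ref{theorem_SMUL} as a black box, whereas you unpack that invocation explicitly.
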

\begin{proof}
    \texttt{C2S} performs the secure multiplication of $\mu$ and $1$ in essence.
    According to Theorem \ref{theorem_SMUL}, \texttt{SMUL\textsubscript{HSS}} can correctly multiply $\mu$ and $1$, and outputs one share in each server, thus \texttt{C2S} correctly converts a ciphertext into shares.
\end{proof}

\subsection{Security Analysis}
KDM security \cite{black2003encryption} makes public encryption scheme secure under the condition that an adversary obtains the encrypted values of certain function with respect to the private key.
Since MORSE involves the encryption of ${(2\alpha)}^{-1}$, i.e., $\llbracket {(2\alpha)}^{-1} \rrbracket$, it is necessary to prove the KDM security of FastPai \cite{ma2021optimized}.

Following the definition of KDM security \cite{brakerski2010circular}, we difine the KDM security model of FastPai \cite{ma2021optimized} as follows.
Inspired by OSY21 \cite{orlandi2021rise}, we only consider the case with one key pair.

\begin{myDef}\label{definition 1}
Let $FastPai = (\texttt{NGen}, \texttt{KeyGen}, \texttt{Enc}, \texttt{Dec})$ be a public key encryption scheme.
For a polynomial-time adversary $\mathcal{A}$, if its advantage in the following experiment is negligible, we say that $FastPai$ is KDM secure.
\begin{enumerate}
    \item The challenger calls \texttt{KeyGen} to generate a key pair ($sk,pk$) and randomly selects $b\leftarrow \{0,1\}$.
    Next, the challenger sends $pk$ to the adversary $\mathcal{A}$.
    \item The adversary $\mathcal{A}$ randomly selects a function $f$ from $\mathcal{F}$, where $\mathcal{F}$ represents a set of functions with respect to the private key $sk$.
    Besides, the challenger sends $c$ to the adversary $\mathcal{A}$. 
    If $b=0$, $c\leftarrow \texttt{Enc}(pk,f(sk))$, otherwise ($b=1$), $c\leftarrow \texttt{Enc}(pk,0)$.
    \item The adversary $\mathcal{A}$ outputs a guess $b' \in \{0,1\}$.
\end{enumerate}

The advantage of the adversary $\mathcal{A}$ is defined as:
\begin{align}
    &\advantage[\mathcal{A}]{\text{KDM\textsuperscript{(1)}}} \nonumber \\
    &=|\prob{b'=1|b=0}-\prob{b'=1|b=1}|.
\end{align}
\end{myDef}

\begin{theorem}
    If the adversary $\mathcal{A}$ obtains the encrypted value of $f(sk)$, where $f\in \mathcal{F}$ and $F$ represents a set of functions with respect to the private key, FastPai \cite{ma2021optimized} remains secure in terms of indistinguishability, i.e., FastPai is KDM secure.
\end{theorem}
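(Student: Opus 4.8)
The plan is to prove KDM security of FastPai by reduction to a standard computational hardness assumption underlying the FastPai cryptosystem \cite{ma2021optimized}, most plausibly a decisional assumption akin to the Decisional Composite Residuosity (DCR) assumption that secures standard Paillier. First I would observe that the only key-dependent ciphertext that MORSE actually produces is $\llbracket (2\alpha)^{-1} \rrbracket$, and that the KDM security model in Definition \ref{definition 1} is restricted to a single key pair (i.e.\ $n=1$), exactly as in OSY21 \cite{orlandi2021rise}. This restriction is what makes the proof tractable: rather than handling an arbitrary adversarially chosen circular function, I only need indistinguishability of an encryption of $f(sk)$ from an encryption of $0$ for the relevant function class $\mathcal{F}$.

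The key steps, in order, are as follows. First, I would recall the structure of a FastPai ciphertext from Eq.\ (\ref{encryption____________________________}): a ciphertext is $(1+N)^m \cdot (h^r \bmod N)^N \bmod N^2$, where $h = -y^{2\beta} \bmod N$ is the public-key element. The multiplicative structure means $\texttt{Enc}(pk,m) = (1+N)^m \cdot \texttt{Enc}(pk,0)$, so the encryption of any message is a deterministic shift of an encryption of $0$ by the $(1+N)^m$ factor. Second, I would argue that the randomizing term $(h^r \bmod N)^N \bmod N^2$ is pseudorandom over the subgroup of $N$-th residues under the appropriate subgroup-indistinguishability or DCR-style assumption, so that an encryption of $0$ is computationally indistinguishable from a uniform element of the relevant ciphertext subgroup, independently of $sk$. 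Third, I would conclude that because the message $f(sk)$ only enters through the public, message-independent factor $(1+N)^{f(sk)}$ applied to this pseudorandom mask, the challenge ciphertext in the $b=0$ case is indistinguishable from that in the $b=1$ case; any adversary $\mathcal{A}$ distinguishing them with non-negligible advantage yields a distinguisher against the underlying assumption, contradicting the semantic security of FastPai established in \cite{ma2021optimized}. I would formalize this with a hybrid argument bounding $\advantage[\mathcal{A}]{\text{KDM\textsuperscript{(1)}}}$ by the advantage against the base assumption plus a negligible statistical term.

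The main obstacle I anticipate is justifying that the pseudorandom mask $(h^r \bmod N)^N \bmod N^2$ genuinely hides the key even when the plaintext is a function of $\alpha$, since $h$ itself depends on $\beta$ and hence on the secret factorization structure. The delicate point is that in standard semantic security the message is adversary-chosen but key-independent, whereas here $f(sk)$ correlates with the very key material that defines $h$. The resolution should mirror OSY21's observation that for Paillier-type schemes the map $m \mapsto (1+N)^m$ is a group homomorphism whose image is statistically disjoint from the residue mask, so the correlation between $f(sk)$ and the key does not leak through the ciphertext distribution; I would need to verify carefully that FastPai's particular choice of $h$ and the $2\alpha$-based decryption do not introduce an algebraic relation that an adversary could exploit. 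Making this disjointness argument rigorous for the specific subgroup generated by $h$ in $\mathbb{Z}_{N^2}^*$, rather than for a generic Paillier element, is where the real work lies.
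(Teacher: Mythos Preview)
Your plan takes a genuinely different route from the paper. The paper's proof is extremely brief: it restates the experiment, observes from Eq.~(\ref{encryption____________________________}) that $\texttt{Enc}(pk,m)$ always outputs an element of $\mathbb{Z}_{N^2}^*$ regardless of $m$, asserts on that basis that $\texttt{Enc}(pk,f(sk))$ and $\texttt{Enc}(pk,0)$ ``have the same distribution,'' and concludes that the advantage is negligible. There is no reduction to a hardness assumption, no hybrid argument, and no treatment of the correlation between $f(sk)$ and the key material underlying $h$ that you correctly identify as the delicate point.

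Your approach --- factoring the ciphertext as $(1+N)^m \cdot \texttt{Enc}(pk,0)$, arguing pseudorandomness of the $N$-th--residue mask under a DCR-type assumption, and bounding the KDM advantage by the distinguishing advantage against that assumption --- is the standard, rigorous way such results are obtained (and is in the spirit of how OSY21 \cite{orlandi2021rise} handles Paillier). What your route buys is an actual security reduction with a named assumption and an honest accounting of where key-dependence could leak; what the paper's route buys is brevity, at the cost of leaving the central claim unjustified. Indeed, if decryption is correct the two ciphertext distributions cannot literally be identical, only computationally indistinguishable, which is exactly the conclusion your reduction would deliver. Your identification of the subgroup structure generated by $h$ in FastPai (as opposed to a generic Paillier randomizer) as the place requiring care is apt and goes beyond anything the paper addresses.
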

\begin{proof}
    Let the challenger generate a key pair ($sk,pk$) and randomly select $b\leftarrow \{0,1\}$.
    Besides, the adversary $\mathcal{A}$ selects a function $f$ from $\mathcal{F}$, where $\mathcal{F}$ represents a set of functions with respect to the private key $sk$.
    Next, the challenger sends $c$ to the adversary $\mathcal{A}$. 
    If $b=0$, $c\leftarrow \texttt{Enc}(pk,f(sk))$, otherwise ($b=1$), $c\leftarrow \texttt{Enc}(pk,0)$.

    In Eq. (\ref{encryption____________________________}), it is easy to observe that $\texttt{Enc}(pk,m)$ always outputs a ciphertext $c \in {\mathbb{Z}}^{*}_{{N}^{2}}$ no matter what value $m$ takes.
    Since $\texttt{Enc}(pk,f(sk))$ and $\texttt{Enc}(pk,0)$ have the same distribution, they are computationally indistinguishable.
    In other words, $\advantage[\mathcal{A}]{\text{KDM\textsuperscript{(1)}}}$ is negligible.

    Thus, we say that FastPai \cite{ma2021optimized} is KDM secure.
\end{proof}

To prove the security of the proposed protocols, we adopt the simulation paradigm \cite{micali1987play}.
Specifically, for a protocol $\mathcal{P}$ and any adversary $\mathcal{A}$ in the real model, assuming there is a simulator $\mathcal{S}$ in the ideal model, if the view of $\mathcal{A}$ from $\mathcal{P}$'s real execution and the view of $\mathcal{S}$ from $\mathcal{P}$'s ideal execution are indistinguishable, we can say that $\mathcal{P}$ is secure.

In MORSE, $S_0$ and $S_1$ are semi-honest and non-colluding. 
Besides, $S_0$ and $S_1$ are possible to be the adversaries.
In the rest of this work, we adopt $\mathcal{A}_{S_0}$ and $\mathcal{A}_{S_1}$ to denote $S_0$ and $S_1$ as adversaries, respectively.

\begin{theorem} \label{SMUL_secure}
Given ($\llbracket x \rrbracket $, ${\left \langle 2\alpha \cdot y \right \rangle}_0$) and ($\llbracket x \rrbracket $, ${\left \langle 2\alpha \cdot y \right \rangle}_1$) held by $S_0$ and $S_1$, respectively, where $x,y \in [-2^l, 2^l]$ and ${\left \langle 2\alpha \cdot y \right \rangle}_1-{\left \langle 2\alpha \cdot y \right \rangle}_0 = 2\alpha \cdot y$, in the case of adversaries $\mathcal{A}_{S_0}$ and $\mathcal{A}_{S_1}$, the proposed \texttt{SMUL\textsubscript{HSS}} in Section \ref{SMUL} securely outputs $z_0$ and $z_1$, such that $z_1-z_0 = 2\alpha \cdot x \cdot y \mod N$.
\end{theorem}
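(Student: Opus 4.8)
The plan is to establish security in the semi-honest, non-colluding model via the simulation paradigm. Since \texttt{SMUL\textsubscript{HSS}} involves no interaction between $S_0$ and $S_1$ during its execution (each server computes $g_i \leftarrow {\llbracket x \rrbracket}^{{\left \langle 2\alpha \cdot y \right \rangle}_i}$ and then $z_i \leftarrow \texttt{DDLog\textsubscript{N}}(g_i)$ entirely locally), the proof should be substantially simpler than for an interactive protocol: there are no incoming messages to simulate. I would therefore construct two independent simulators, $\mathcal{S}_{S_0}$ for $\mathcal{A}_{S_0}$ and $\mathcal{S}_{S_1}$ for $\mathcal{A}_{S_1}$, and argue that each server's view in the real execution is indistinguishable from its simulated view.

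First I would fix the view of $\mathcal{A}_{S_0}$. Its view consists of its input $(\llbracket x \rrbracket, {\left \langle 2\alpha \cdot y \right \rangle}_0)$ together with all locally computed values $g_0$ and $z_0$. Since these are deterministic functions of the input, the simulator $\mathcal{S}_{S_0}$, given only $S_0$'s input and output, simply runs the honest algorithm on that input and reproduces $g_0$ and $z_0$ exactly; the simulated view is then \emph{identical} to the real view, not merely indistinguishable. I would present the symmetric argument for $\mathcal{A}_{S_1}$: its view is $(\llbracket x \rrbracket, {\left \langle 2\alpha \cdot y \right \rangle}_1, g_1, z_1)$, all likewise reproducible by $\mathcal{S}_{S_1}$ from $S_1$'s own input and output.

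The substantive point — and the step I expect to carry the real content — is to argue that neither server's input leaks information about the secret $y$ or about the private key $\alpha$. Here I would invoke the statistical hiding property of subtractive secret sharing: because ${\left \langle 2\alpha \cdot y \right \rangle}_1$ is drawn from an interval of width $m \cdot 2^\kappa$ that statistically masks $2\alpha \cdot y$, each individual share ${\left \langle 2\alpha \cdot y \right \rangle}_i$ is (statistically close to) independent of $y$ and of $\alpha$. Consequently a simulator holding only one share, or equivalently a freshly sampled value of the correct form, produces a distribution on that share indistinguishable from the real one. For the ciphertext $\llbracket x \rrbracket$, I would appeal to the semantic security of FastPai, so that $\llbracket x \rrbracket$ reveals nothing about $x$ to either server.

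The main obstacle to watch is the treatment of correlation: since $S_0$ and $S_1$ hold the \emph{same} ciphertext $\llbracket x \rrbracket$ and \emph{correlated} shares satisfying ${\left \langle 2\alpha \cdot y \right \rangle}_1 - {\left \langle 2\alpha \cdot y \right \rangle}_0 = 2\alpha \cdot y$, I must be careful to rely only on the non-colluding assumption so that each simulator operates on a \emph{single} server's view and never needs to reconcile both shares simultaneously. Because the protocol is non-interactive, no message crosses between the servers during execution, so the joint correlation is never exposed to either adversary in isolation; this is precisely what makes the per-party simulation go through, and I would state it explicitly to close the argument.
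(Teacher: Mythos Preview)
Your proposal is correct, and in fact it is arguably cleaner than the paper's own argument, though the framing differs. The paper constructs each simulator by ignoring the server's actual input and instead sampling fresh dummies $\hat{x},\hat{y},\hat{\alpha}$ (resp.\ $\bar{x},\bar{y},\bar{\alpha}$), encrypting $\hat{x}$, freshly sharing $2\hat{\alpha}\hat{y}$, and recomputing $\hat{g}_i,\hat{z}_i$; indistinguishability of this simulated view from the real one is then asserted in a single step by appealing to the semantic security of FastPai and the randomness of the share. Your route instead separates the two concerns: first you observe that, because \texttt{SMUL\textsubscript{HSS}} is entirely non-interactive, each server's view is a deterministic function of its own input, so a simulator given that input reproduces the view \emph{identically}; second, you isolate the actual privacy content---that a single share $\langle 2\alpha y\rangle_i$ statistically hides both $y$ and $\alpha$, and that $\llbracket x\rrbracket$ computationally hides $x$---as a property of the input distribution rather than of the protocol transcript. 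Both arguments reach the same conclusion; yours matches the standard semi-honest simulation template more closely (simulator gets the party's input and output) and makes explicit where the non-collusion assumption is used, while the paper's version packages the hiding argument directly into the simulator construction.
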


\begin{proof}
    To simulate $S_0$ and $S_1$, we construct two independent simulators $\mathcal{S}_{S_0}$ and $\mathcal{S}_{S_1}$, respectively. 

    $\mathcal{S}_{S_0}$ simulates the view of $\mathcal{A}_{S_0}$ as follows.
    \begin{itemize}
        \item $\mathcal{S}_{S_0}$ takes $\llbracket x \rrbracket $ and ${\left \langle 2\alpha \cdot y \right \rangle}_0$ as inputs.
        Next, $\mathcal{S}_{S_0}$ randomly selects $\hat{x},\hat{y}\leftarrow {\{0,1\}}^{l}$ and $\hat{\alpha}\leftarrow {\{0,1\}}^{l(\kappa)}$, where $l(\kappa)$ represents the bit-length of the private key under $\kappa$-bit security level.
        \item 
        $\mathcal{S}_{S_0}$ calls \texttt{Enc} to encrypt $\hat{x}$ into $\llbracket \hat{x} \rrbracket$.
        Next, $\mathcal{S}_{S_0}$ splits $2\hat{\alpha} \cdot \hat{y}$ into ${\left \langle 2\hat{\alpha} \cdot \hat{y} \right \rangle}_0$ and ${\left \langle 2\hat{\alpha} \cdot \hat{y} \right \rangle}_1$, where ${\left \langle 2\hat{\alpha} \cdot \hat{y} \right \rangle}_1-{\left \langle 2\hat{\alpha} \cdot \hat{y} \right \rangle}_0 = 2\hat{\alpha} \cdot \hat{y} $.
        Besides, $\mathcal{S}_{S_0}$ computes       
        $\hat{g_0} \leftarrow {\llbracket \hat{x} \rrbracket}^{{\left \langle 2\hat{\alpha} \cdot \hat{y} \right \rangle}_0 }$ and $\hat{z_0} \leftarrow \texttt{DDLog\textsubscript{N}}(\hat{g_0})$.
        \item 
        Finally, $\mathcal{S}_{S_0}$ outputs the simulation of $\mathcal{A}_{S_0}$'s view, consisting of $\llbracket \hat{x} \rrbracket$, ${\left \langle 2\hat{\alpha} \cdot \hat{y} \right \rangle}_0$, $\hat{g_0}$ and $\hat{z_0}$.
    \end{itemize}
    
    Since $\hat{x}$, $\hat{y}$ and $\hat{\alpha}$ are randomly selected, and FastPai \cite{ma2021optimized} is semantic secure, $\mathcal{S}_{S_0}$'s view from ideal model and $\mathcal{A}_{S_0}$'s view from real model are computationally indistinguishable.

    $\mathcal{S}_{S_1}$ simulates the view of $\mathcal{A}_{S_1}$ as follows.
    \begin{itemize}
        \item $\mathcal{S}_{S_1}$ takes $\llbracket x \rrbracket $ and ${\left \langle 2\alpha \cdot y \right \rangle}_1$ as inputs.
        Then, $\mathcal{S}_{S_1}$ randomly selects $\Bar{x},\Bar{y}\leftarrow {\{0,1\}}^{l}$ and $\Bar{\alpha}\leftarrow {\{0,1\}}^{l(\kappa)}$, where $l(\kappa)$ represents the bit-length of the private key under $\kappa$-bit security level.
        \item 
         $\mathcal{S}_{S_1}$ calls \texttt{Enc} to encrypt $\Bar{x}$ into $\llbracket \Bar{x} \rrbracket$.
        Then, $\mathcal{S}_{S_1}$ splits $2\Bar{\alpha} \cdot \Bar{y}$ into ${\left \langle 2\Bar{\alpha} \cdot \Bar{y} \right \rangle}_0$ and ${\left \langle 2\Bar{\alpha} \cdot \Bar{y} \right \rangle}_1$, where ${\left \langle 2\Bar{\alpha} \cdot \Bar{y} \right \rangle}_1-{\left \langle 2\Bar{\alpha} \cdot \Bar{y} \right \rangle}_0 = 2\Bar{\alpha} \cdot \Bar{y} $.
        Next, $\mathcal{S}_{S_1}$ computes       
        $\Bar{g_1} \leftarrow {\llbracket \Bar{x} \rrbracket}^{{\left \langle 2\Bar{\alpha} \cdot \Bar{y} \right \rangle}_1 }$ and $\Bar{z_1} \leftarrow \texttt{DDLog\textsubscript{N}}(\Bar{g_1})$.
        \item 
        Finally, $\mathcal{S}_{S_1}$ outputs the simulation of $\mathcal{A}_{S_1}$'s view, consisting of $\llbracket \Bar{x} \rrbracket$, ${\left \langle 2\Bar{\alpha} \cdot \Bar{y} \right \rangle}_1$, $\Bar{g_1}$ and $\Bar{z_1}$.
    \end{itemize}
    
    Since $\Bar{x}$, $\Bar{y}$ and $\Bar{\alpha}$ are randomly selected, and FastPai \cite{ma2021optimized} is semantic secure, $\mathcal{S}_{S_1}$'s view from ideal model and $\mathcal{A}_{S_1}$'s view from real model are computationally indistinguishable.

    Thus, we say that \texttt{SMUL\textsubscript{HSS}} securely outputs $z_0$ and $z_1$, such that $z_1-z_0 = 2\alpha \cdot x \cdot y \mod N$.
\end{proof}

\begin{theorem} \label{d=1/2}
Given $x,y \in [-2^l,2^l]$, $d=2\alpha \cdot r_1 \cdot (x-y+1)+r_2$ or $d=2\alpha \cdot r_1 \cdot (y-x)+r_2$, where $r_1\leftarrow {\{0,1\}}^\sigma \setminus \{0\} $, $ r_2 \leq \frac{N}{2}$ and $r_1 + r_2 > \frac{N}{2}$, if $\prob{d=2\alpha \cdot r_1 \cdot (x-y+1)+r_2}=\prob{d=2\alpha \cdot r_1 \cdot (y-x)+r_2}=\frac{1}{2}$, $S_1$ has probability of $\frac{1}{2}$ to successfully compare the relative size of $x$ and $y$.
In other words, $\prob{d>\frac{N}{2}}=\prob{d\leq \frac{N}{2}}=\frac{1}{2}$.
\end{theorem}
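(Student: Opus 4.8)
The plan is to treat $x$ and $y$ as fixed and to regard the only relevant randomness as the hidden bit $\pi$ chosen by $S_0$, which selects between the two admissible forms of $d$. The hypothesis $\prob{d=2\alpha \cdot r_1 \cdot (x-y+1)+r_2}=\prob{d=2\alpha \cdot r_1 \cdot (y-x)+r_2}=\frac{1}{2}$ is precisely the statement that the two forms (equivalently, $\pi=0$ and $\pi=1$) each occur with probability $\frac{1}{2}$. Since the sign test performed by $S_1$ (whether $d>\frac{N}{2}$) produces the single bit $\mu_0$ that $S_1$ forwards, it suffices to show that this bit is uniformly distributed regardless of the true ordering of $x$ and $y$.

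First I would recall from the correctness analysis that both candidate values of $d$ lie strictly inside $(0,N)$ and that their position relative to $\frac{N}{2}$ is governed by the true comparison: $2\alpha r_1(x-y+1)+r_2>\frac{N}{2}$ iff $x\geq y$, whereas $2\alpha r_1(y-x)+r_2>\frac{N}{2}$ iff $x<y$. These two equivalences are the crux of the argument, since they show that, holding the ordering of $x,y$ fixed, the event $\{d>\frac{N}{2}\}$ flips according to which form $d$ takes.

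Next I would apply the law of total probability by conditioning on $\pi$, splitting into the mutually exclusive cases $x\geq y$ and $x<y$. When $x\geq y$, the form with $(x-y+1)$ gives $d>\frac{N}{2}$ while the form with $(y-x)$ gives $d\leq\frac{N}{2}$, so
\begin{align}
\prob{d>\tfrac{N}{2}} = \tfrac{1}{2}\cdot 1 + \tfrac{1}{2}\cdot 0 = \tfrac{1}{2}.
\end{align}
The case $x<y$ is symmetric and yields the same value, so $\prob{d>\frac{N}{2}}=\prob{d\leq\frac{N}{2}}=\frac{1}{2}$ in both situations.

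Finally I would conclude the privacy statement. Because the distribution of the observed bit $\mu_0$ is identical (uniform) whether $x\geq y$ or $x<y$, the view of $S_1$ carries no information about the ordering; equivalently, $\mu_0$ equals the true comparison bit masked by the uniformly random $\pi$. Hence any guess $S_1$ makes about the relative size of $x$ and $y$ succeeds with probability exactly $\frac{1}{2}$. I expect the only delicate point to be the clean re-use of the sign equivalences from the correctness proof together with the explicit observation that $\pi$ is chosen independently of $x,y$, which legitimizes the total-probability split; the arithmetic itself is routine.
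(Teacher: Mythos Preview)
Your proposal is correct and follows essentially the same approach as the paper: both arguments fix the ordering of $x$ and $y$, observe that $\pi=0$ and $\pi=1$ yield opposite outcomes for the event $\{d>\tfrac{N}{2}\}$ via the sign equivalences from the correctness analysis, and then average over the uniform choice of $\pi$ to obtain probability $\tfrac{1}{2}$. Your presentation is slightly more explicit about invoking the law of total probability and the privacy interpretation at the end, but the mathematical content is identical.
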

\begin{proof}
    We firstly consider the case $x\geq y$.
    If $\pi = 0$ ($\prob{\pi=0}=\frac{1}{2}$), we have $d=2\alpha \cdot r_1 \cdot (x-y+1)+r_2 > \frac{N}{2}$, otherwise ($\pi=1$, $\prob{\pi=1}=\frac{1}{2}$), we have $d=2\alpha \cdot r_1 \cdot (y-x)+r_2 \leq \frac{N}{2}$.
    Thus, it is easy to see that $\prob{d>\frac{N}{2}}=\prob{d\leq \frac{N}{2}}=\frac{1}{2}$
    when $x\geq y$.

    Similarly, we consider the case $x<y$.
    If $\pi = 0$ ($\prob{\pi=0}=\frac{1}{2}$), we have $d=2\alpha \cdot r_1 \cdot (x-y+1)+r_2 \leq \frac{N}{2}$, otherwise ($\pi=1$, $\prob{\pi=1}=\frac{1}{2}$), we have $d=2\alpha \cdot r_1 \cdot (y-x)+r_2 > \frac{N}{2}$.
    Thus, it is obviously that $\prob{d>\frac{N}{2}}=\prob{d\leq \frac{N}{2}}=\frac{1}{2}$
    when $x< y$.

    Taking together, we say that for any $x,y \in [-2^l,2^l]$, $\prob{d>\frac{N}{2}}=\prob{d\leq \frac{N}{2}}=\frac{1}{2}$ holds.
\end{proof}

\begin{theorem}
Given $\llbracket x \rrbracket $ and $\llbracket y \rrbracket $, where $x,y \in [-2^l, 2^l]$, in the case of adversaries $\mathcal{A}_{S_0}$ and $\mathcal{A}_{S_1}$, the proposed \texttt{SCMP\textsubscript{HSS}} in Section \ref{SCMP} securely compares $x$ and $y$.     
\end{theorem}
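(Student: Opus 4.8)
The plan is to follow the simulation paradigm used for the earlier theorems, constructing two independent simulators $\mathcal{S}_{S_0}$ and $\mathcal{S}_{S_1}$ that reproduce the views of the semi-honest, non-colluding adversaries $\mathcal{A}_{S_0}$ and $\mathcal{A}_{S_1}$, respectively. For each simulator I must show that its output in the ideal model is computationally indistinguishable from the corresponding adversary's view in the real execution, relying on the semantic security of FastPai \cite{ma2021optimized} and on Theorem \ref{d=1/2}.

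First I would handle $\mathcal{A}_{S_0}$, whose view consists of its own inputs $\llbracket x \rrbracket, \llbracket y \rrbracket$ and the assisted tuple, the locally sampled randomness $r_1, r_2, \pi$, the locally derived $D$ and $z_0$, and the single incoming message $\llbracket \mu_0 \rrbracket$ from $S_1$. Since $\mathcal{S}_{S_0}$ holds all the real inputs, it can run Step (\romannumeral1) and Step (\romannumeral3) honestly; the only quantity it cannot derive from its inputs is $\llbracket \mu_0 \rrbracket$. Because $S_0$ never possesses the private key, $\llbracket \mu_0 \rrbracket$ is an opaque Paillier ciphertext, so $\mathcal{S}_{S_0}$ can substitute a fresh encryption $\llbracket 0 \rrbracket$ in its place; the semantic security of FastPai makes the two views indistinguishable.

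The substantive work is the simulator $\mathcal{S}_{S_1}$. Its view is $({\left \langle 2\alpha \right \rangle}_1, \llbracket 0 \rrbracket, \llbracket 1 \rrbracket)$ together with the incoming pair $(D, z_0)$, from which $z_1 = \texttt{DDLog\textsubscript{N}}(D^{{\left \langle 2\alpha \right \rangle}_1})$, $d = z_1 - z_0 \bmod N$, and $\llbracket \mu_0 \rrbracket$ are deterministically derived. I would have $\mathcal{S}_{S_1}$ sample a fresh ciphertext to play the role of $D$ (indistinguishable from the real re-randomized $D$ by semantic security), compute $z_1$ from this $D$ and its own share, draw a target $\hat{d}$ from the distribution the real $d$ follows, and set $z_0 = z_1 - \hat{d} \bmod N$. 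The soundness of this step rests on Theorem \ref{d=1/2}: because $\pi$ is a uniformly random bit, $\prob{d>\frac{N}{2}} = \prob{d \leq \frac{N}{2}} = \frac{1}{2}$ independently of whether $x \geq y$, and the additive mask $r_2$ spreads $d$ across its range, so a simulated $\hat{d}$ carries no information about the comparison outcome. Hence the simulated $(D, z_0)$ and everything deterministically derived from it match the real distribution.

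The main obstacle is precisely the argument for $\mathcal{S}_{S_1}$: I must verify that $z_0$, and therefore $d$, genuinely hides the sign of $x-y$ rather than being masked only in expectation. This reduces to showing that the joint distribution of $(D, z_0)$ is independent of the comparison bit, which combines two facts — the semantic security that makes $D$ reveal nothing about $x, y, r_1, \pi$ (recalling that $S_1$ holds only the share ${\left \langle 2\alpha \right \rangle}_1$ and thus cannot decrypt $D$), and the $\pi$-symmetry together with the $r_2$-masking captured by Theorem \ref{d=1/2}, which makes the threshold test $\mathbb{1}[d>\frac{N}{2}]$ behave as a fair coin regardless of the inputs. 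Once both simulators are shown to be indistinguishable from the real views, I conclude that \texttt{SCMP\textsubscript{HSS}} securely compares $x$ and $y$.
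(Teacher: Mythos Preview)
Your proposal is correct and follows essentially the same simulation-paradigm approach as the paper: two independent simulators, semantic security of FastPai for the ciphertexts seen by each party, and Theorem \ref{d=1/2} together with the $r_2$ mask to argue that $d$ (and hence $z_0$) leaks nothing about the sign of $x-y$. The only notable difference is that the paper makes the $r_2$-masking step more explicit by invoking the CPA security of the one-time-key scheme $x+r$ from SOCI \cite{zhao2022soci}, whereas you describe it more informally as ``$r_2$ spreads $d$ across its range''; you may want to tighten that step accordingly.
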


\begin{proof}
    To simulate $S_0$ and $S_1$, we construct two independent simulators $\mathcal{S}_{S_0}$ and $\mathcal{S}_{S_1}$, respectively. 

    $\mathcal{S}_{S_0}$ simulates the view of $\mathcal{A}_{S_0}$ as follows.
    \begin{itemize}
        \item $\mathcal{S}_{S_0}$ takes $\llbracket x \rrbracket$, $\llbracket y \rrbracket$ and $\llbracket \mu_0 \rrbracket$ as inputs.
        Next, $\mathcal{S}_{S_0}$ randomly selects $\hat{x},\hat{y} \in [-2^l, 2^l]$, $\hat{r_1}\leftarrow {\{0,1\}}^\sigma \setminus \{0\}$, $\hat{r_2}\in (\frac{N}{2}-\hat{r_1},\frac{N}{2}]$, $\hat{\pi}\leftarrow \{0,1\}$ and $\hat{\alpha}\leftarrow {\{0,1\}}^{l(\kappa)}$, where $l(\kappa)$ represents the bit-length of the private key under $\kappa$-bit security level.
        \item 
        $\mathcal{S}_{S_0}$ calls $\texttt{Enc}$ to encrypt $\hat{x}$, $\hat{y}$, 0 and 1 into $\llbracket \hat{x} \rrbracket$, $\llbracket \hat{y} \rrbracket$, $\llbracket 0 \rrbracket$ and $\llbracket 1 \rrbracket$, respectively. 
        Next, $\mathcal{S}_{S_0}$ splits $2\hat{\alpha}$ into ${\left \langle 2\hat{\alpha} \right \rangle}_0$ and ${\left \langle 2\hat{\alpha} \right \rangle}_1$, such that ${\left \langle 2\hat{\alpha} \right \rangle}_1-{\left \langle 2\hat{\alpha} \right \rangle}_0 = 2\hat{\alpha}$.
        \item 
        If $\hat{\pi}=0$, $\mathcal{S}_{S_0}$ computes $\hat{D} \leftarrow {(\llbracket \hat{x} \rrbracket \cdot {\llbracket \hat{y} \rrbracket}^{-1}\cdot \llbracket 1 \rrbracket)}^{\hat{r_1}}$, otherwise ($\hat{\pi}=1$), $\mathcal{S}_{S_0}$ computes $\hat{D} \leftarrow {(\llbracket \hat{y} \rrbracket \cdot {\llbracket \hat{x} \rrbracket}^{-1})}^{\hat{r_1}}$.
        Besides, $\mathcal{S}_{S_0}$ computes $\hat{z_0} \leftarrow \texttt{DDLog\textsubscript{N}}(\hat{D}^{{\left \langle 2\hat{\alpha} \right \rangle}_0})$ and $\hat{z_0} \leftarrow \hat{z_0}-\hat{r_2}$.
        Next, if $\Hat{\pi} = 0$, $\mathcal{S}_{S_0}$ computes $\llbracket \hat{\mu} \rrbracket  \leftarrow \llbracket {\mu}_0 \rrbracket $, otherwise ($\hat{\pi} = 1$), $\mathcal{S}_{S_0}$ computes $\llbracket \hat{\mu} \rrbracket  \leftarrow \llbracket 1 \rrbracket \cdot {\llbracket {\mu}_0\rrbracket }^{-1}$.
        Moreover, $\mathcal{S}_{S_0}$ computes $\llbracket \hat{\mu} \rrbracket  \leftarrow \llbracket \hat{\mu} \rrbracket \cdot \llbracket 0 \rrbracket$.
        \item 
        Finally, $\mathcal{S}_{S_0}$ outputs the simulation of $\mathcal{A}_{S_0}$'s view, consisting of $\llbracket \hat{x} \rrbracket$, $\llbracket \hat{y} \rrbracket$, $\hat{D}$, $\llbracket \hat{\mu} \rrbracket$ and $\hat{z_0}$.
    \end{itemize}

    Since $\hat{x}$, $\hat{y}$, $\hat{r_1}$, $\hat{r_2}$ and $\hat{\alpha}$ are randomly selected, and FastPai \cite{ma2021optimized} is semantic secure, $\mathcal{S}_{S_0}$'s view from ideal model and $\mathcal{A}_{S_0}$'s view from real model are computationally indistinguishable.

    $\mathcal{S}_{S_1}$ simulates the view of $\mathcal{A}_{S_1}$ as follows.
    \begin{itemize}
    \item $\mathcal{S}_{S_1}$ takes $D$ and $z_0$ as inputs.
    Next, $\mathcal{S}_{S_1}$ randomly selects $\Bar{x},\Bar{y} \in [-2^l, 2^l]$, $\Bar{r_1}\leftarrow {\{0,1\}}^\sigma \setminus \{0\}$, $\Bar{r_2}\in (\frac{N}{2}-\Bar{r_1},\frac{N}{2}]$, $\Bar{\pi}\leftarrow \{0,1\}$ and $\Bar{\alpha}\leftarrow {\{0,1\}}^{l(\kappa)}$, where $l(\kappa)$ represents the bit-length of the private key under $\kappa$-bit security level.
    \item 
    $\mathcal{S}_{S_1}$ splits $2\Bar{\alpha}$ into ${\left \langle 2\Bar{\alpha} \right \rangle}_0$ and ${\left \langle 2\Bar{\alpha} \right \rangle}_1$, such that ${\left \langle 2\Bar{\alpha} \right \rangle}_1-{\left \langle 2\Bar{\alpha} \right \rangle}_0 = 2\Bar{\alpha}$.
    Besides, if $\Bar{\pi}=0$, $\mathcal{S}_{S_1}$ calls $\texttt{Enc}$ to encrypt $2\Bar{\alpha}\cdot \Bar{r_1} \cdot (\Bar{x}-\Bar{y}+1)+\Bar{r_2}$ into $\Bar{D}$, otherwise ($\Bar{\pi}=1$), $\mathcal{S}_{S_1}$ calls $\texttt{Enc}$ to encrypt $2\Bar{\alpha}\cdot \Bar{r_1} \cdot (\Bar{y}-\Bar{x})+\Bar{r_2}$ into $\Bar{D}$.
    \item 
    $\mathcal{S}_{S_1}$ computes $\Bar{z_0} \leftarrow \texttt{DDLog\textsubscript{N}}(D^{{\left \langle 2\Bar{\alpha} \right \rangle}_0})$ and $\Bar{z_0} \leftarrow \Bar{z_0}-\Bar{r_2}$.
    Besides, $\mathcal{S}_{S_1}$ computes $\Bar{z_1} \leftarrow \texttt{DDLog\textsubscript{N}}(D^{{\left \langle 2\Bar{\alpha} \right \rangle}_1})$ and $\Bar{d} \leftarrow \Bar{z_1}-\Bar{z_0} \mod N$.
    \item 
    Next, if $\Bar{x}\geq \Bar{y}$, $\mathcal{S}_{S_1}$ sets $\Bar{\mu_0}=0$, otherwise ($\Bar{x} < \Bar{y}$), $\mathcal{S}_{S_1}$ sets $\Bar{\mu_0}=1$.
    Then, $\mathcal{S}_{S_1}$ calls $\texttt{Enc}$ to encrypt $\Bar{\mu_0}$ into $\llbracket \Bar{\mu_0} \rrbracket$.
    \item 
    If $\Bar{\pi} = 0$, $\mathcal{S}_{S_1}$ outputs the simulation of $\mathcal{A}_{S_1}$'s view, consisting of $\Bar{D}$, $\Bar{z_0}$, $\Bar{z_1}$, $2\Bar{\alpha}\cdot \Bar{r_1} \cdot (\Bar{x}-\Bar{y}+1)+\Bar{r_2}$ and $\llbracket \Bar{\mu_0} \rrbracket$, otherwise ($\Bar{\pi}$=1), $\mathcal{S}_{S_1}$ outputs the simulation of $\mathcal{A}_{S_1}$'s view, consisting of $\Bar{D}$, $\Bar{z_0}$, $\Bar{z_1}$, $2\Bar{\alpha}\cdot \Bar{r_1} \cdot (\Bar{y}-\Bar{x})+\Bar{r_2}$ and $\llbracket \Bar{\mu_0} \rrbracket$.
    \end{itemize}

    Since the one time key encryption scheme $x+r$ is proven to be chosen-plaintext attack secure by SOCI \cite{zhao2022soci}, $2\Bar{\alpha}\cdot \Bar{r_1} \cdot (\Bar{x}-\Bar{y}+1)+\Bar{r_2}$ and $2\alpha \cdot r_1 \cdot (x-y+1)+r_2$ are computationally indistinguishable. 
    Similarly, $2\Bar{\alpha}\cdot \Bar{r_1} \cdot (\Bar{y}-\Bar{x})+\Bar{r_2}$ and $2\alpha \cdot r_1 \cdot (y-x)+r_2$ are also computationally indistinguishable. 
    In addition, according to Theorem \ref{d=1/2}, we have $\prob{d>\frac{N}{2}}=\prob{d\leq \frac{N}{2}}=\frac{1}{2}$.
    Besides, since $\Bar{x}$, $\Bar{y}$ and $\Bar{\alpha}$ are randomly selected, and FastPai \cite{ma2021optimized} is semantic secure, $\mathcal{S}_{S_1}$'s view from ideal model and $\mathcal{A}_{S_1}$'s view from real model are computationally indistinguishable.

    Thus, we say that \texttt{SCMP\textsubscript{HSS}} can compare $x$ and $y$ securely.
\end{proof}

\begin{theorem}
    Given the outputs of \texttt{SMUL\textsubscript{HSS}}, i.e., the shares $z_0 = {\left \langle 2\alpha \cdot x \cdot y \right \rangle}_0$ and $z_1 = {\left \langle 2\alpha \cdot x \cdot y \right \rangle}_1$, in the case of adversaries $\mathcal{A}_{S_0}$ and $\mathcal{A}_{S_1}$, the share to ciphertext protocol (\texttt{S2C}) in Section \ref{conversion protocol} securely converts shares into ciphertexts.  
\end{theorem}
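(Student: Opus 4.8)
The plan is to follow the simulation paradigm already used for \texttt{SMUL\textsubscript{HSS}}, constructing two independent simulators $\mathcal{S}_{S_0}$ and $\mathcal{S}_{S_1}$ that reproduce the real-model views of the corrupted servers $\mathcal{A}_{S_0}$ and $\mathcal{A}_{S_1}$, respectively. First I would enumerate precisely what each server observes during an execution of \texttt{S2C}. The view of $S_0$ consists of its input share ${\left \langle 2\alpha \cdot x \cdot y \right \rangle}_0$, the assisted ciphertext $\llbracket {(2\alpha)}^{-1} \rrbracket$, the locally derived share ${\left \langle {(2\alpha)}^{-1}\cdot (2\alpha) \cdot x \cdot y \right \rangle}_0$ together with its encryption, the incoming ciphertext $\llbracket {\left \langle {(2\alpha)}^{-1}\cdot (2\alpha) \cdot x \cdot y \right \rangle}_1 \rrbracket$ received from $S_1$, and the final $\llbracket x\cdot y \rrbracket$; the view of $S_1$ is symmetric.

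For $\mathcal{S}_{S_0}$, I would sample $\hat{x},\hat{y}\leftarrow {\{0,1\}}^{l}$ and $\hat{\alpha}\leftarrow {\{0,1\}}^{l(\kappa)}$, where $l(\kappa)$ is the bit-length of the private key under $\kappa$-bit security, form $\llbracket {(2\hat{\alpha})}^{-1} \rrbracket$ by calling \texttt{Enc}, run the honest step-(i) computation on the given input share via \texttt{DDLog\textsubscript{N}} to obtain the encrypted local share, and simulate the message ostensibly sent by $S_1$ as a fresh encryption of a dummy value (e.g.\ $\llbracket 0 \rrbracket$), before assembling a simulated $\llbracket \hat{x}\cdot\hat{y} \rrbracket$. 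The simulator $\mathcal{S}_{S_1}$ proceeds analogously with $\bar{x},\bar{y},\bar{\alpha}$, treating the ciphertext arriving from $S_0$ as its externally supplied input. Indistinguishability of the derived shares is immediate because subtractive shares are statistically uniform, and the freshly generated ciphertexts (the encrypted local share, the simulated incoming message, and the assembled output) are handled by the semantic security of FastPai \cite{ma2021optimized}.

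The main obstacle is the assisted ciphertext $\llbracket {(2\alpha)}^{-1} \rrbracket$, which is a \emph{key-dependent} message: it encrypts a function of the secret key $\alpha$, so a simulator that does not know $\alpha$ cannot reproduce it honestly, and plain semantic security does not suffice to justify replacing it by $\llbracket {(2\hat{\alpha})}^{-1} \rrbracket$. Here I would invoke the KDM security of FastPai \cite{ma2021optimized} established earlier in this section: since $\advantage[\mathcal{A}]{\text{KDM\textsuperscript{(1)}}}$ is negligible, the real ciphertext $\llbracket {(2\alpha)}^{-1} \rrbracket$ and the simulated one are computationally indistinguishable even from the viewpoint of an adversary that later multiplies it against its own share. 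Combining the KDM-security argument for this key-dependent ciphertext with the statistical uniformity of the shares and the semantic security of the remaining fresh ciphertexts, I would conclude that $\mathcal{S}_{S_0}$'s and $\mathcal{S}_{S_1}$'s simulated views are computationally indistinguishable from the real views of $\mathcal{A}_{S_0}$ and $\mathcal{A}_{S_1}$, and hence that \texttt{S2C} securely converts shares into ciphertexts.
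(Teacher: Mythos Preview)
Your proposal is correct but takes a substantially different route from the paper. The paper's own proof is a two-sentence reduction: it observes that \texttt{S2C} is ``based on'' \texttt{SMUL\textsubscript{HSS}} and then invokes Theorem~\ref{SMUL_secure} directly, without constructing any new simulator or mentioning the key-dependent ciphertext $\llbracket (2\alpha)^{-1}\rrbracket$ or the exchanged encryptions $\llbracket \langle\cdot\rangle_i\rrbracket$. You instead give a direct simulation-based argument that explicitly enumerates each server's view, replaces the incoming encrypted share by a fresh dummy ciphertext via semantic security, and---most notably---isolates the assisted ciphertext $\llbracket (2\alpha)^{-1}\rrbracket$ as a key-dependent message and handles it through the KDM\textsuperscript{(1)} security of FastPai. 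Your approach buys rigor: the paper's reduction is terse to the point of glossing over the fact that \texttt{S2C} contains communication (the cross-sent encrypted shares) and a KDM ciphertext that simply do not appear in \texttt{SMUL\textsubscript{HSS}}, so a bare appeal to Theorem~\ref{SMUL_secure} does not literally cover the full view. The paper's approach buys brevity and matches its treatment of \texttt{C2S}, where the analogous reduction is cleaner because \texttt{C2S} really is just an instance of \texttt{SMUL\textsubscript{HSS}} with $y=1$. Either argument is acceptable here; yours is the more careful one.
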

\begin{proof}
    The share to ciphertext protocol (\texttt{S2C}) is based on \texttt{SMUL\textsubscript{HSS}}.
    According to Theorem \ref{SMUL_secure}, \texttt{SMUL\textsubscript{HSS}} is secure to achieve multiplication, so \texttt{S2C} is also secure to convert shares into ciphertexts. 
\end{proof}

\begin{theorem}
    Given the output of \texttt{SCMP\textsubscript{HSS}}, i.e., a ciphertext $\llbracket \mu \rrbracket$, in the case of adversaries $\mathcal{A}_{S_0}$ and $\mathcal{A}_{S_1}$, the ciphertext to share protocol (\texttt{C2S}) in Section \ref{conversion protocol} securely converts a ciphertext into shares.
\end{theorem}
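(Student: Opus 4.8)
The plan is to proceed exactly as in the security proof of \texttt{S2C}, namely by observing that \texttt{C2S} is nothing but an invocation of \texttt{SMUL\textsubscript{HSS}} and then reducing to Theorem \ref{SMUL_secure}. Concretely, the shares ${\left \langle 2\alpha \right \rangle}_0$ and ${\left \langle 2\alpha \right \rangle}_1$ stored in $Assisted_{S_0}$ and $Assisted_{S_1}$ are valid subtractive shares of $2\alpha = 2\alpha \cdot 1$, so the two local computations $\texttt{DDLog\textsubscript{N}}({\llbracket \mu \rrbracket}^{{\left \langle 2\alpha \right \rangle}_0})$ and $\texttt{DDLog\textsubscript{N}}({\llbracket \mu \rrbracket}^{{\left \langle 2\alpha \right \rangle}_1})$ are precisely the two steps of \texttt{SMUL\textsubscript{HSS}} run on $x = \mu$ and $y = 1$. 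Hence \texttt{C2S} realizes the secure multiplication of $\mu$ and $1$, and its security should inherit from that of \texttt{SMUL\textsubscript{HSS}}.

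To make this rigorous under the simulation paradigm, I would construct two independent simulators $\mathcal{S}_{S_0}$ and $\mathcal{S}_{S_1}$ mirroring those in the proof of Theorem \ref{SMUL_secure}. For $\mathcal{A}_{S_0}$ the task is light: in \texttt{C2S} the party $S_0$ receives no incoming message (it only sends $\llbracket \mu \rrbracket$), so $\mathcal{S}_{S_0}$ samples a random $\hat{\mu}$ and $\hat{\alpha}\leftarrow {\{0,1\}}^{l(\kappa)}$, forms $\llbracket \hat{\mu} \rrbracket$ together with a share ${\left \langle 2\hat{\alpha} \right \rangle}_0$, computes $\texttt{DDLog\textsubscript{N}}({\llbracket \hat{\mu} \rrbracket}^{{\left \langle 2\hat{\alpha} \right \rangle}_0})$, and outputs the resulting view. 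For $\mathcal{A}_{S_1}$, the simulator $\mathcal{S}_{S_1}$ additionally has to simulate the single incoming message $\llbracket \mu \rrbracket$ that $S_1$ receives from $S_0$; it does so by encrypting a random $\hat{\mu}$ into $\llbracket \hat{\mu} \rrbracket$ and performing the analogous local $\texttt{DDLog\textsubscript{N}}$ computation with its own random share ${\left \langle 2\hat{\alpha} \right \rangle}_1$. Because $\hat{\mu}$ and $\hat{\alpha}$ are random and FastPai \cite{ma2021optimized} is semantically secure, each simulated view is computationally indistinguishable from the corresponding real view.

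The only place where the argument departs from \texttt{SMUL\textsubscript{HSS}} — and hence the sole genuine obstacle — is the extra transmission of $\llbracket \mu \rrbracket$ from $S_0$ to $S_1$, which enlarges $\mathcal{A}_{S_1}$'s view relative to the multiplication protocol, where both servers already hold the common ciphertext as input. Since $\llbracket \mu \rrbracket$ is a FastPai ciphertext and FastPai is semantically secure (indeed KDM secure, as established earlier), the real message $\llbracket \mu \rrbracket$ and the simulated $\llbracket \hat{\mu} \rrbracket$ are computationally indistinguishable, so this additional element leaks nothing beyond what Theorem \ref{SMUL_secure} already covers. Combining this observation with the indistinguishability of the two simulated views, I would conclude that the real and ideal views of both $\mathcal{A}_{S_0}$ and $\mathcal{A}_{S_1}$ are computationally indistinguishable, and therefore \texttt{C2S} securely converts a ciphertext into shares.
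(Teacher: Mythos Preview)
Your proposal is correct and follows essentially the same approach as the paper: both argue that \texttt{C2S} is an instance of \texttt{SMUL\textsubscript{HSS}} (multiplying $\mu$ by $1$) and then invoke Theorem~\ref{SMUL_secure}. Your treatment is in fact more thorough than the paper's two-sentence proof, since you explicitly build the simulators and separately handle the transmission of $\llbracket \mu \rrbracket$ to $S_1$, a step the paper's proof does not single out.
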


\begin{proof}
    The ciphertext to share protocol (\texttt{C2S}) is based on \texttt{SMUL\textsubscript{HSS}}.
    According to Theorem \ref{SMUL_secure}, \texttt{SMUL\textsubscript{HSS}} is secure to achieve multiplication, so \texttt{C2S} is also secure to convert a ciphertext into shares.
\end{proof}

\begin{table*}[!ht]
\centering
\caption{Runtime comparison between MORSE and related schemes (Unit: \lowercasenote{ms})}
\begin{tabular}{@{}ccccccc@{}}
\toprule
Algorithms            & MORSE          & FGJS17 \cite{fazio2017homomorphic} & OSY21 \cite{orlandi2021rise} & RS21 \cite{roy2021large} & BFV \cite{fan2012somewhat}           & CKKS \cite{cheon2017homomorphic}\\ \midrule
Secure multiplication & \textbf{6.3}  & 3978.1                             & 58.4                         & 92.5                     & 8.8                                  & 9.6  \\
Secure addition       & \textbf{0.01} & 0.01                               & 0.01                         & 0.02                     & 0.05                                 & 0.11  \\
Secure subtraction    & 0.06          & 0.06                               & 0.06                         & 0.11                     & \textbf{0.04}                        & 0.08 \\
Scalar-multiplication & \textbf{0.08} & 0.08                               & 0.08                         & 0.13                     & 1.54                                 & 3.17 \\ \bottomrule
\end{tabular}
\label{Comparison of between the proposed HSS scheme and related schemes}
\end{table*}

\section{Experimental Evaluation}\label{Section_7}
\subsection{Experimental Settings}
We implement MORSE on a single host (CPU: 11th Gen Intel(R) Core(TM) i7-11700 @ 2.50GHz; Memory: 16.0 GB) with Python 3.6.13 and gmpy2-2.1.5, and we treat each party as a separate process to simulate LAN network.
Note that the experiments are conducted using a single thread, with each experiment repeats 500 times, and the average of these values taken as the results.
In the rest of this work, we represent the bit-length of $N$ as $|N|$.
In our experimental settings, we set the security level as 128-bit security level, i.e., $|N|=3072$.

\subsection{Comparison between MORSE and Existing Solutions}
In this subsection, we compare the basic cryptographic operations of MORSE with those of the related HSS schemes (FGJS17 \cite{fazio2017homomorphic}, OSY21 \cite{orlandi2021rise} and RS21 \cite{roy2021large}), as well as the famous fully homomorphic encryption schemes (BFV \cite{fan2012somewhat} and CKKS \cite{cheon2017homomorphic}).
Note that BFV and CKKS are implemented in TenSEAL \cite{benaissa2021tenseal,Tenseal_code}, a framework that is constructed on Microsoft SEAL \cite{Microsoft_SEAL} and provides Python API.
Besides, the parameters for $\texttt{DDLog}$ of FGJS17 \cite{fazio2017homomorphic} are configured as $\delta=0.05$ and $M=10001$, and the parameter $s$ is set as 2 for the Damg{\aa}rd–Jurik public-key encryption scheme adopted in RS21 \cite{roy2021large} (i.e., the modulus of ciphertexts is $N^3$).
To determine the exponent $b$ of $xy=g^b$ by $\texttt{DDLog}$, where $x$ and $y$ are held by two different parties, FGJS17 \cite{fazio2017homomorphic} claims that $b$ should be a small number and $b<M$.
Therefore, we randomly select the test data from $[1,100]$ in our experimental settings, ensuring the maximum of $b$ is 10000, which complies with the condition $b<M$.

As depicted in Table \ref{Comparison of between the proposed HSS scheme and related schemes}, MORSE demonstrates superior performance in secure multiplication, secure addition and scalar-multiplication.
Among MORSE and the related schemes (FGJS17 \cite{fazio2017homomorphic}, OSY21 \cite{orlandi2021rise} and RS21 \cite{roy2021large}), the secure multiplication is conducted in parallel across the two servers, with the time complexities being nearly identical on the two servers.
Hence, we consider the longest runtime of the two servers as the benchmark result.
Specifically, the proposed secure multiplication (\texttt{SMUL\textsubscript{HSS}}) offers approximately 9.3 times greater efficiency compared to the state-of-the-art HSS scheme OSY21 \cite{orlandi2021rise}.
Furthermore, \texttt{SMUL\textsubscript{HSS}} outperforms the famous FHE schemes BFV \cite{fan2012somewhat} and CKKS \cite{cheon2017homomorphic}.
Despite BFV \cite{fan2012somewhat} has slight advantage in secure subtraction, MORSE markedly outperforms BFV \cite{fan2012somewhat} in secure multiplication, secure addition and scalar-multiplication.
Note that the runtime for secure multiplication in FGJS17 \cite{fazio2017homomorphic} significantly exceeds that of other HSS schemes.
One reason is that the \texttt{DDLog} in FGJS17 \cite{fazio2017homomorphic} necessitates iterative computations to resolve the distributed discrete logarithm, whereas other HSS schemes' \texttt{DDLog} can achieve this directly.
In short, MORSE features efficient \texttt{DDLog} and reduced number of modular exponential operations, leading to a great improvement in efficiency.
Arguably, MORSE enjoys superior performance in linear operations.

To enable the outputs of secure multiplication to serve as the inputs for subsequent secure multiplication, FGJS17 \cite{fazio2017homomorphic} and OSY21 \cite{orlandi2021rise} split the private key into multiple shares and involve complex operations for making the outputs ${\left \langle  x \cdot y \right \rangle}_0$ and ${\left \langle  x \cdot y \right \rangle}_1$ be ${\left \langle sk \cdot x \cdot y \right \rangle}_0$ and ${\left \langle sk \cdot x \cdot y \right \rangle}_1$, respectively, where $sk$ is the private key of FGJS17 \cite{fazio2017homomorphic} and OSY21 \cite{orlandi2021rise}.
Note that Table \ref{Comparison of between the proposed HSS scheme and related schemes} omits these operations and only considers single multiplication for simplicity.
To compare the performance of performing consecutive operations between MORSE and existing HSS schemes, we evaluate a polynomial $f(x)=x^5+x^4+x^3+x^2+x+1$ (i.e, 4 consecutive secure multiplications).
Since FGJS17 \cite{fazio2017homomorphic} is theoretically more time-consuming than OSY21 \cite{orlandi2021rise} and RS21 \cite{roy2021large} in performing consecutive secure multiplication, we only compare MORSE with OSY21 \cite{orlandi2021rise} and RS21 \cite{roy2021large}.
With the modulus reduction technology introduced in Section \ref{modulus_reduction}, MORSE is able to transfer the modulus of shares from $N$ to a smaller modulus to make consecutive secure multiplications more efficient. 
However, this method is not applicable for OSY21 \cite{orlandi2021rise} and RS21 \cite{roy2021large}.
As presented in Table \ref{Comparison of performing consecutive operations}, MORSE significantly outperforms the existing schemes in terms of runtime.
To enable consecutive operations, OSY21 \cite{orlandi2021rise} and RS21 \cite{roy2021large} convert the shares with a Paillier modulus into the shares over $\mathbb{Z}$, incurring a negligible calculation error.
Fortunately, MORSE discards this conversion and enjoys no calculation error.
Arguably, MORSE shows a superior advantage in performing consecutive operations.

\begin{table}[]
\caption{Runtime comparison of performing consecutive operations}
\label{Comparison of performing consecutive operations}
\centering
\begin{tabular}{@{}llll@{}}
\toprule
Metrics       & MORSE                    & OSY21 \cite{orlandi2021rise} & RS21 \cite{roy2021large}  \\ \midrule
Runtime (Unit: ms) & \multicolumn{1}{c}{24.2} & \multicolumn{1}{c}{1274.8}   & \multicolumn{1}{c}{363.2} \\ 
Calculation error  & \multicolumn{1}{c}{no error}                 & \multicolumn{1}{c}{negligible}                   & \multicolumn{1}{c}{negligible} \\
\bottomrule
\end{tabular}
\end{table}

\begin{table*}[]
\caption{Comparison of different secure comparison protocols}
\centering
\begin{tabular}{@{}cccccc@{}}
\toprule
Metrics                       & MORSE           & SOCI\textsuperscript{+} \cite{zhao2024soci+} & SOCI \cite{zhao2022soci} & POCF \cite{liu2016privacy} & DGK \cite{veugen2018correction} \\ \midrule
Runtime (Unit: ms)            & \textbf{13.9}  & 47.7                                         & 141.3                    & 144.0                      & 163.5                           \\
Communication cost (Unit: KB) & \textbf{1.874} & 2.247                                        & 2.249                    & 2.249                      & 27.740                          \\ \bottomrule
\end{tabular}
\label{Comparison of different secure comparison protocols}
\end{table*}

To evaluate the performance of the proposed secure comparison protocol (\texttt{SCMP\textsubscript{HSS}}), we compare it with SOCI\textsuperscript{+} \cite{zhao2024soci+}, SOCI \cite{zhao2022soci}, POCF \cite{liu2016privacy} and a famous comparison protocol DGK \cite{veugen2018correction}.
In our experimental settings of secure comparison, we set the data length as 32 (i.e., $l=32$).
As illustrated in Table \ref{Comparison of different secure comparison protocols}, the proposed \texttt{SCMP\textsubscript{HSS}} demonstrates significantly superior performance over other schemes in terms of runtime and communication costs.
Notably, compared with the state-of-the-art SOCI\textsuperscript{+} \cite{zhao2024soci+}, the proposed \texttt{SCMP\textsubscript{HSS}} offers a runtime improvement of up to 3.4 times and achieves a reduction of $16.6\%$ in communication costs.

\begin{figure}[h]
    \centering
    \subfloat[The runtime under a varying $N$]{
        \includegraphics[width=0.23\textwidth]{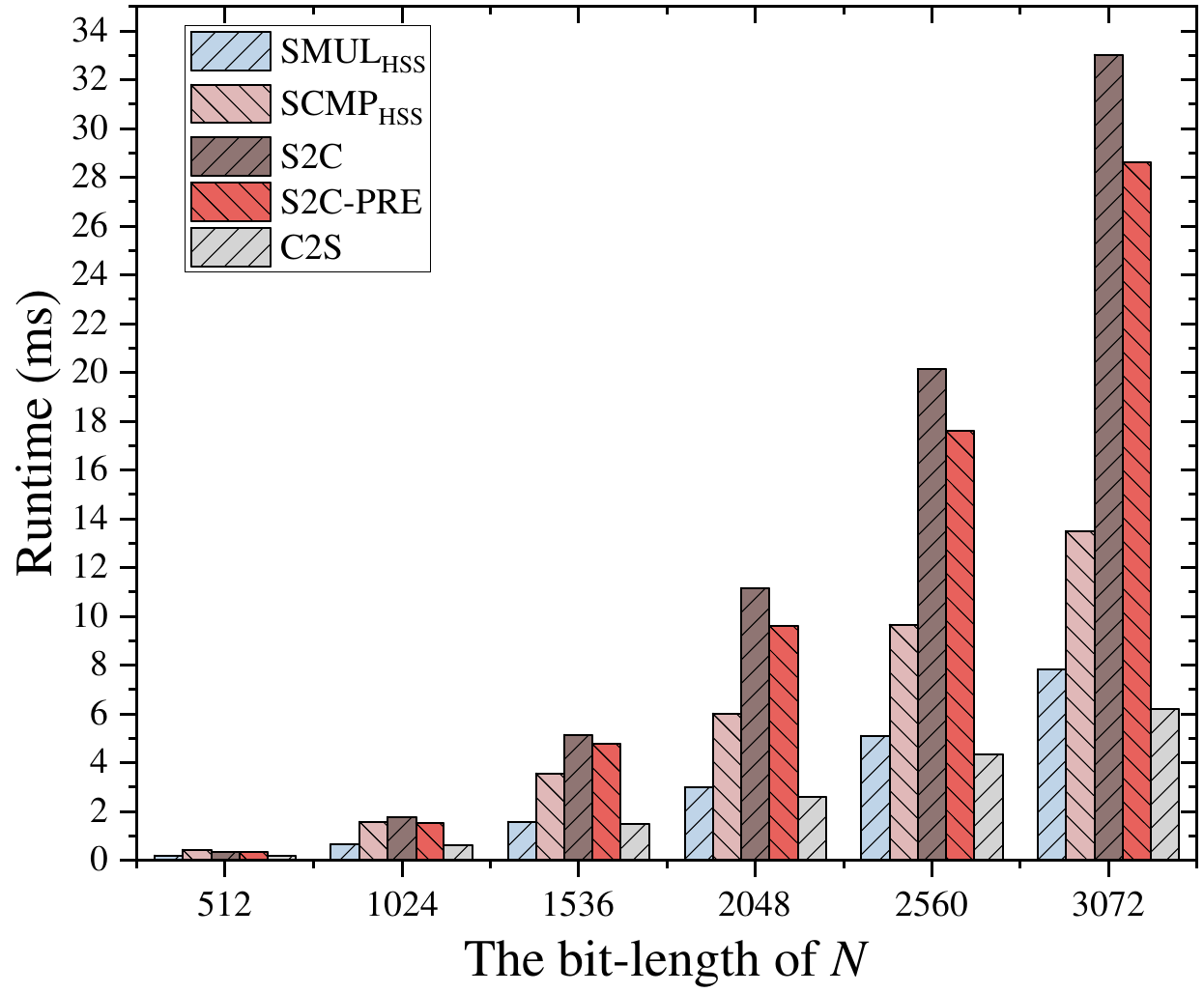}
        \label{The runtime under a varying N}
    }
    \hfill
    \subfloat[The communication cost under a varying $N$]{
        \includegraphics[width=0.23\textwidth]{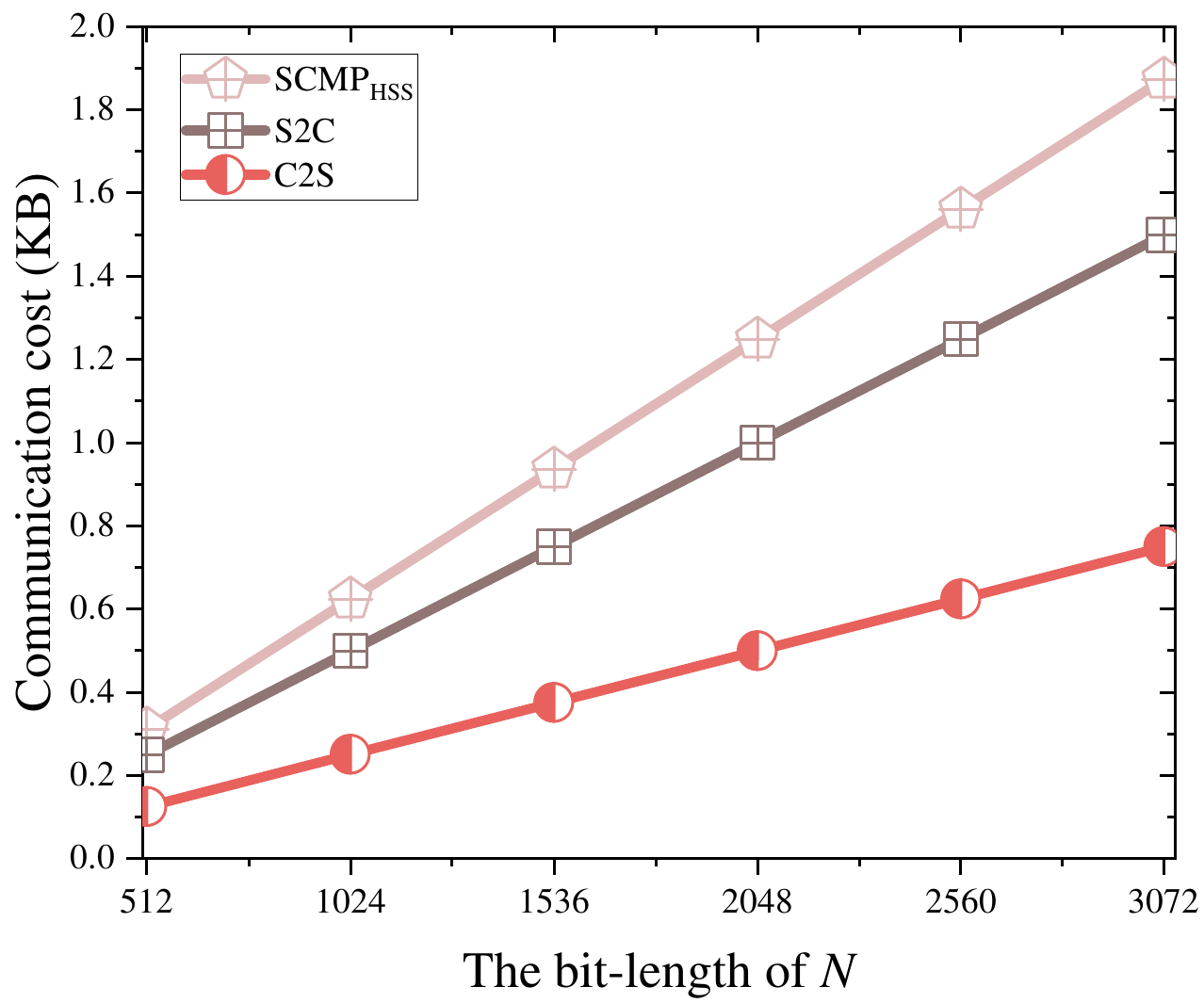}
        \label{The communication cost under a varying N}
    }
    \caption{The performance of MORSE under a varying $N$ ($l=32$)}
    \label{The performance of HSS under a varying N}
\end{figure}

\begin{figure}[h]
    \centering
    \subfloat[The runtime under a varying $l$]{
        \includegraphics[width=0.23\textwidth]{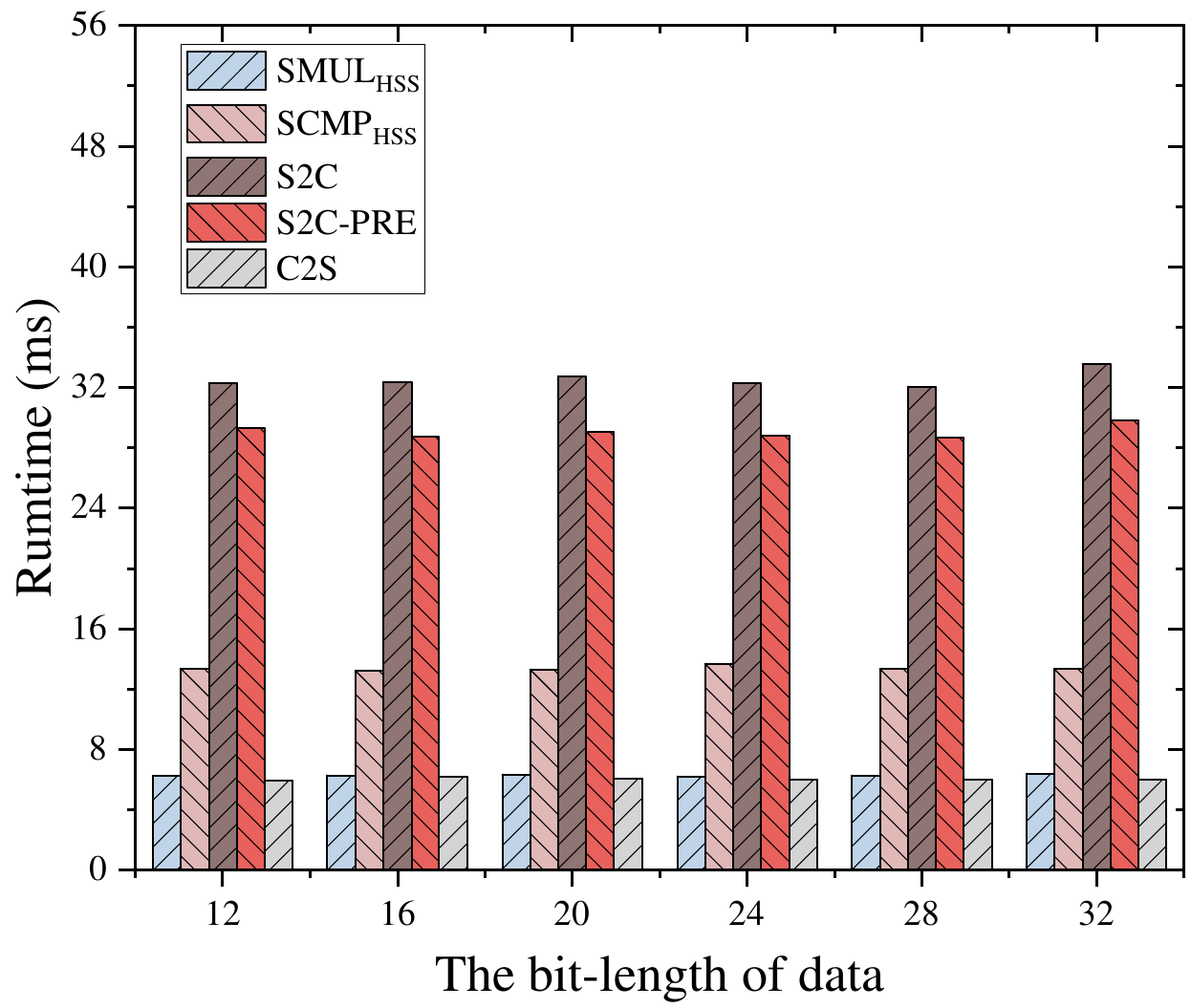}
        \label{The runtime under a varying L}
    }
    \hfill
    \subfloat[The communication cost under a varying $l$]{
        \includegraphics[width=0.23\textwidth]{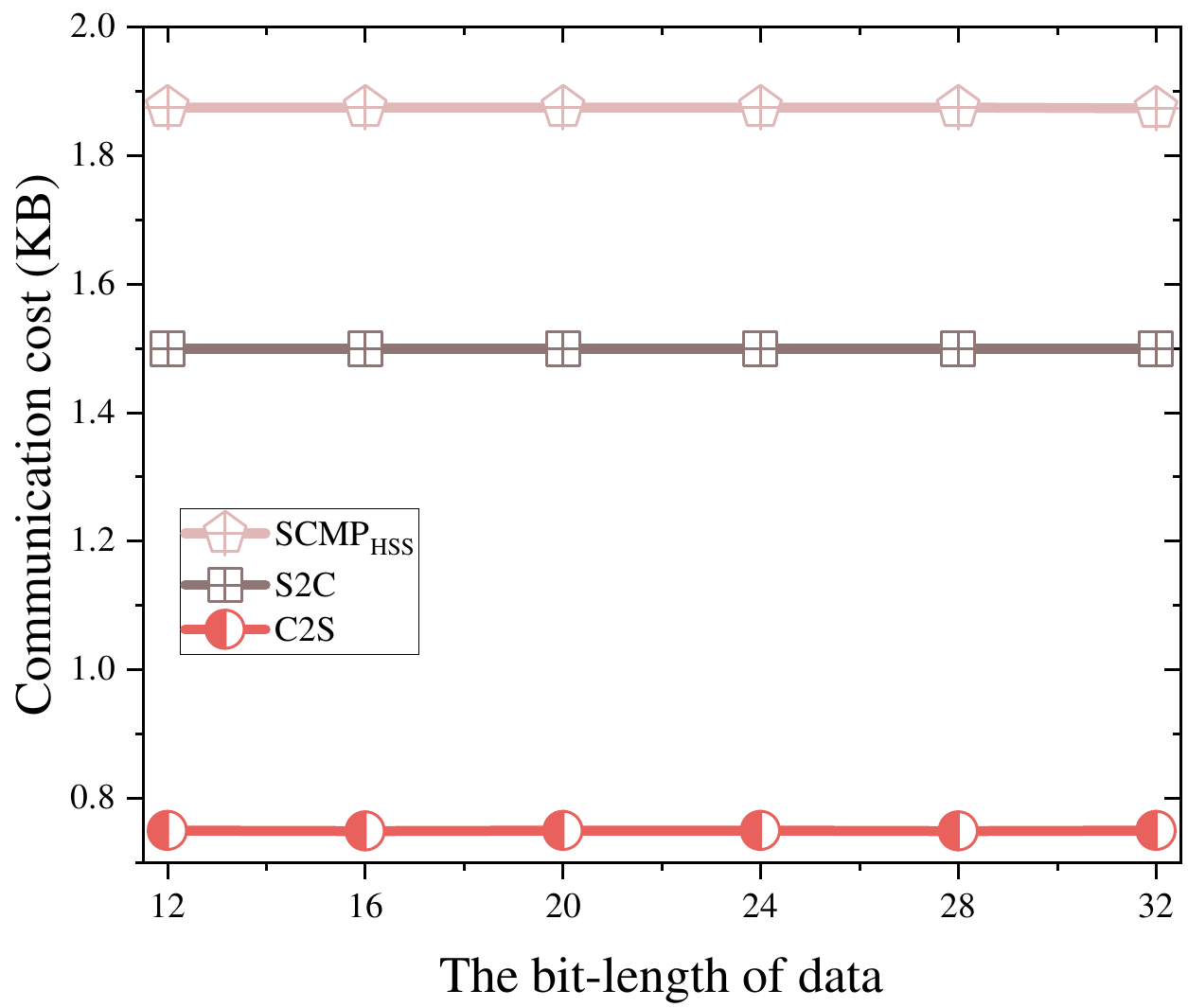}
        \label{The communication cost under a varying L}
    }
    \caption{The performance of MORSE under a varying $l$ ($|N|=3072$)}
    \label{The performance of HSS under a varying L}
\end{figure}

\subsection{Performance Analysis of MORSE}
Given one pair of base and exponent $(b,e)$, it requires $1.5|e|$ multiplication operations to compute $b^e$ \cite{knuth2014art}, where $|e|$ is the bit-length of $e$.
Considering that the exponentiation operation consumes more resources than addition operation and multiplication operation \cite{liu2016efficient_multiple_keys}, we only consider the time-consuming modulo exponentiation operation in our analysis.
The data length and the modulo multiplication operation are denoted as $l$ and \texttt{mul}, respectively.

\texttt{SMUL\textsubscript{HSS}} costs $1.5\cdot (|2\alpha|+ (l+\kappa))$ \texttt{mul} for both $S_0$ and $S_1$, and these operations are executed in parallel on the two servers.
\texttt{SCMP\textsubscript{HSS}} costs $1.5\cdot(\sigma + |2\alpha|+\kappa)$ \texttt{mul} for $S_0$ and $1.5\cdot (|2\alpha|+ \kappa)$ \texttt{mul} for $S_1$.
\texttt{S2C} costs $1.5\cdot (|N|+4\kappa)$ \texttt{mul} for both $S_0$ and $S_1$, and these operations are executed in parallel on the two servers.
By adopting the pre-computation table for speeding up \texttt{Enc} in the scheme \cite{ma2021optimized}, the costs of $\texttt{S2C}$ can be reduced to $1.5\cdot |N|+\lceil 
\frac{4\kappa}{b} \rceil $ (i.e., $b=5$) \texttt{mul}.
\texttt{C2S} costs $1.5\cdot(|2\alpha|+\kappa)$ \texttt{mul} for both $S_0$ and $S_1$, and these operations are executed in parallel on the two servers.
Since $\kappa$ and $\alpha$ are related to $N$, we can see that the runtime of the proposed protocols mainly depends on $N$.
As shown in Figs. \ref{The performance of HSS under a varying N}(a) and \ref{The performance of HSS under a varying L}(a), the runtime of the proposed protocols increases with $|N|$ increasing, and the runtime is independent of $l$.

For communication costs, the communication costs of \texttt{SCMP\textsubscript{HSS}}, \texttt{S2C} and \texttt{C2S} are $5|N|$ bits, $4|N|$ bits and $2|N|$ bits, respectively.
As shown in Figs. \ref{The performance of HSS under a varying N}(b) and \ref{The performance of HSS under a varying L}(b), the communication costs of the proposed protocols increase with $|N|$ increasing, and the communication costs are independent of $l$.

\section{Conclusion} \label{Section_8}
In this work, we proposed MORSE, an efficient homomorphic secret sharing based on FastPai \cite{ma2021optimized}, which addresses the limitations in existing schemes.
Specifically, we presented a secure multiplication protocol \texttt{SMUL\textsubscript{HSS}} enabling two servers to perform secure multiplication on Paillier ciphertexts in local.
Besides, built top on \texttt{SMUL\textsubscript{HSS}}, we proposed an secure comparison protocol \texttt{SCMP\textsubscript{HSS}} to enable non-linear operation.
Furthermore, we designed two protocols to enable the conversion between shares and ciphertexts, i.e., a share to ciphertext protocol \texttt{S2C} and a ciphertext to share protocol \texttt{C2S}.
We provided rigorous analyses depicting that the proposed protocols correctly and securely output the results.
Experimental results demonstrated that MORSE significantly outperforms the related solutions.
In future work, we intend to extend MORSE to support more types of secure computation and operations on floating-point numbers. 


\bibliographystyle{IEEEtran}
\normalem
\bibliography{hss}

\end{document}